\newtheorem{thm}{Theorem}[section]
\newtheorem{prop}[thm]{Proposition}
\theoremstyle{definition}
\theoremstyle{remark}
\newtheorem{rem}[thm]{Remark}
\numberwithin{equation}{section}
\begin{document}

\title{Towards Analytical Electromagnetic Models for Reconfigurable Intelligent Surfaces}

\author{Tiebin~Mi,~\IEEEmembership{Member,~IEEE,}
  Jianan~Zhang,~\IEEEmembership{Student Member,~IEEE,}
  Rujing~Xiong,~\IEEEmembership{Student Member,~IEEE,}
  Zhengyu~Wang,~\IEEEmembership{Student Member,~IEEE,}
  Robert~Caiming~Qiu,~\IEEEmembership{Fellow,~IEEE,}
  \thanks{T. Mi, J. Zhang, R. Xiong, Z. Wang, and R. Qiu are with the School of Electronic Information and Communications, Huazhong University of Science and Technology, Wuhan 430074, China (e-mail: \{mitiebin,zhangjn,rujing,wangzhengyu,caiming\}@hust.edu.cn).}%
}

\maketitle

\begin{abstract}
  Physically accurate and mathematically tractable models are presented to characterize scattering and reflection properties of reconfigurable intelligent surfaces (RISs). We take continuous and discrete strategies to model a single patch and patch array and their interactions with multiple incident electromagnetic (EM) waves. The proposed models consider the effect of the incident and scattered angles, polarization features, and the topology and geometry of RISs. Particularly, a simple system of linear equations can describe the multiple-input multiple-output (MIMO) behaviors of RISs under reasonable assumptions. It can serve as a fundamental model for analyzing and optimizing the performance of RIS-aided systems in the far-field regime. The proposed models are employed to identify the advantages and limitations of three typical configurations. One important finding is that complicated beam reshaping functionality can not be endowed by popular phase compensation configurations. A possible solution is the simultaneous configurations of collecting area and phase shifting. Numerical simulations verify the effectiveness of the proposed configuration schemes.
\end{abstract}

\begin{IEEEkeywords}
  Reconfigurable intelligent surfaces, system model, scattering, far-field, phase compensation, arbitrary beam reshaping.
\end{IEEEkeywords}

\section{Introduction}\label{S:Introduction}
\noindent Reconfigurable intelligent surfaces (RISs) have received considerable attention for their potential to enhance the communication and sensing performance of wireless networks \cite{basar2019wireless, wu2019intelligent, wu2019towards, yuan2021reconfigurable, wymeersch2020radio}. RISs consist of an array of tunable electromagnetic (EM) unit cells \cite{liu2018programmable}. Such a tuning varies widely in the literature, among which switch diode and continuous tuning varactor based controlling are popular \cite{pei2021ris, tang2020wireless, dai2020reconfigurable, nayeri2018reflectarray}. The scattering and reflection properties of EM waves impinging upon RISs can be controlled by software \cite{wu2019towards, di2020smart, di2020reconfigurable, huang2020holographic}, depending on configuration, topology, and geometry.

For communication purposes, RISs are employed to offer coverage enhancement by signal-to-noise ratio (SNR) improvement and interference suppression. Channel estimations for RIS-aided communication were discussed in \cite{liu2020matrix, wei2021channel, zheng2020intelligent, faqiri2022physfad}. The authors of \cite{wu2019intelligent, yildirim2021hybrid, zheng2021double} showed that cooperative active and passive beamforming could significantly improve the coverage and energy consumption. The impact of discrete phase shifting was studied in \cite{di2020hybrid, zhang2020reconfigurable, zhou2020spectral}. On the other hand, for environmental awareness purposes, RISs are expected to offer high-resolution sensing and localization capabilities. RISs can be equipped with embedded sensors or active unit cells \cite{hu2018beyond1, zhang2022holographic, puglielli2015design, long2021active}. A sensing scheme was studied in \cite{song2022intelligent}, which minimized the Cram\'er-Rao lower bound for high accuracy direction estimation.

A fundamental component of analyzing and optimizing RIS-aided systems is the development of simple but accurate models to characterize the scattering and reflection properties. There are two possible methodologies to model the interactions of RISs to EM waves impinging on them.

The first class, taking a continuous perspective, treats RISs as continuous surfaces subject to appropriate electromagnetic boundary conditions \cite{ozdogan2019intelligent, najafi2020physics, bjornson2020power, di2021communication}. The field supported by RISs is strictly governed by Maxwell's equations and the associated boundary conditions. This class is often used to identify the theoretical limits of performance independent of the specific implementation of RISs. However, the strategy is not particularly useful for the analysis and design because it is difficult to establish the relationship between boundary conditions and unit cell pattern, configuration, topology, and geometry \cite{ellingson2021path}.

The second class, taking a discrete perspective, calculates the response of RISs as a summation of the scattering from all unit cells \cite{wu2019intelligent, tang2020wireless, kammoun2020asymptotic}. A widely used but greatly oversimplified model is to assume that unit cells act as diffusive scatterers able to change the amplitude and phase of the incident EM waves, i.e.,
\begin{equation}\label{E:SimpleModelUnit}
  E^s = \gamma e^{j \Omega} E^i ,
\end{equation}
where $E^i$ and $E^s$ are the incident and reflected scalar electric fields, respectively, $\gamma \in (0, 1]$ is the reflection coefficient, and $\Omega \in [0, 2 \pi)$ is the phase shifting to be optimized \cite{gradoni2021end, abeywickrama2020intelligent}.

Moreover, for the transmission model of RIS-aided systems, consider communication from a single-antenna source to a single-antenna destination as an example. Suppose the RIS consists of $N$ unit cells. The deterministic channel from the transmitter to RIS is denoted by $\mathbf{h}^{i} = [ h^i_1, \cdots, h^i_N ]^\top$. The channel between RIS and receiver is denoted by $\mathbf{h}^{s} = [ h^s_1, \cdots, h^s_N ]^\top$. The configuration is represented by a diagonal matrix $\mathbf{W} = \text{diag} \bigl( \gamma_1 e^{j \Omega_1}, \cdots, \gamma_N e^{j \Omega_N} \bigr)$. The received signal at the destination is given by the summation of the scattered fields, i.e., 
\begin{equation}\label{E:SimpleModelArray}
  y = \sum_{n=1}^{N} \gamma_n e^{j \Omega_n} h^{i}_n h^{s}_n x = {\mathbf{h}^{s}}^\top \mathbf{W} \mathbf{h}^{i} x,
\end{equation}
where $x$ is the transmitted signal at the source. 

Most widely used models, e.g., (\ref{E:SimpleModelUnit}) and (\ref{E:SimpleModelArray}), are typically rudimentary to facilitate theoretical analysis and numerical simulation. A clear disadvantage of (\ref{E:SimpleModelUnit}) is that it is not derived from the physics and electromagnetic points of view. The interaction with incident EM waves (e.g., angles, frequencies, polarizations) is entirely ignored. The main weakness of (\ref{E:SimpleModelArray}) is that it does not contain any information about the topology and geometry of RISs. The model assumes that channel states are perfectly known, which becomes quite challenging without dedicated signal processing capability at the passive unit cells.

A significant limitation of current research is the lack of physically accurate and mathematically tractable models to describe the input/output behaviors of RISs. A complete description of the fields supported by RISs at any point in space requires Maxwell's equations and the associated boundary conditions. The electromagnetic nature requires us to rethink models from \emph{first principles} to ensure performance prediction and algorithmic design build on a solid foundation.

Inspired by the two classes of methodologies mentioned above, this paper uses continuous and discrete strategies to model a single patch and patch array. Specifically, by leveraging physical optics, which approximate boundary conditions that concern fields on a closed surface \cite{balanis2012advanced}, we introduce a physically accurate formula for the electric field scattered by a rectangular metallic patch with near-zero thickness. This strategy is appropriate because unit cells below 10 GHz are mainly based on simple planar shapes \cite{tapio2021survey}, e.g., rectangular patches \cite{dai2020reconfigurable, pei2021ris, dai2019wireless, wang2019design}.

On the other hand, for patch arrays, the phase discrepancies introduced by configuration and interelement path length difference are essential for beamforming functionality. The path length differences are closely related to the topology and geometry of RISs. We derive mathematically tractable models to characterize the overall input/output behaviors of RISs by applying the superposition principle on the fields scattered by all unit cells. The proposed models are both physically rigorous and easy to calculate. They can serve as fundamental models for analyzing and optimizing the performance of RIS-aided systems.

\subsection{Contributions}
\noindent The principal contribution of this work is to provide mathematically concise models to characterize the scattering and reflection properties of RISs consistent with electromagnetic theory. We focus on analytical methods to model a single patch and patch array and their interactions with multiple incident EM waves. Perhaps the most important finding is that a simple system of linear equations can describe the multiple-input multiple-output (MIMO) behaviors of RISs under reasonable assumptions. To the best of our knowledge, no studies have been performed to investigate the MIMO behaviors of RISs. It can serve as a fundamental model for analyzing and optimizing the performance of RIS-aided systems in the far-field regime.

The development of physically motivated and mathematically tractable models facilitates theoretical analysis. The second contribution is that the proposed models identify the advantages and limitations of three typical configurations. We clarify that the anomalous mirror effect is a significant limitation of continuous phase compensation configuration, especially when multiple incident EM waves exist. Another important finding is that a possible solution to complicated beam reshaping functionality is the simultaneous configurations of collecting area and phase shifting. Such configurations have the power to redistribute the incident waves to (almost) arbitrary radiation patterns.

\subsection{Outline}
\noindent The remainder of this paper is organized as follows. We start by considering how a unit cell responds to incident EM waves in Section~\ref{S:ScatteringUnitCell}. Physical optics techniques are employed to calculate the bistatic scattered field of a single patch. Section~\ref{S:ResponseArray} is devoted to the interaction of a patch array with external illuminations. Section~\ref{S:LinearRISs} focuses on analytical models of linear RISs. One interesting finding is that, under reasonable assumptions, a system of linear equations can describe the MIMO behavior of RISs. Section~\ref{S:PerformanceAnalysis} studies the performance of linear RISs with three typical configurations. We identify that the anomalous mirror effect is the inherent limitation of continuous phase compensation configuration.

\subsection{Reproducible Research}
\noindent The simulation results can be reproduced using code available at: \url{https://github.com/mitiebin/RIS_Models.git}

\subsection{Notation}
\noindent We describe the electric and magnetic fields in a spherical coordinate system. The strength of the incident electric field is denoted by $E^i ( \theta^i,\phi^i )$. The boldface letters stand for vector fields. For example, we denote by $\mathbf{E}^s ( r^s, \theta^s, \phi^s )$ the scattered electric field at $( r^s, \theta^s, \phi^s )$. We will write it simply $\mathbf{E}^s$ when no confusion can arise. The corresponding $(r, \theta, \phi)$ components are denoted by $E^{s}_{r}$, $E^{s}_{\theta}$ and $E^{s}_{\phi}$, respectively. We denote by $\lambda$ the wavelength throughout the paper. A Cartesian coordinate system is used to describe the topology and geometry of RISs. We use $\mathbf{e}_x$, $\mathbf{e}_y$, and $\mathbf{e}_z$ to denote the unit vectors in $x$, $y$, and $z$ directions.

\section{Scattering from a single planar unit cell}\label{S:ScatteringUnitCell}
\noindent In this section, we consider the fundamental principles that govern the interaction of each unit cell with incident EM fields. To characterize the scattered fields observed from multiple angles, models taking into account the incident and reflective directions simultaneously are necessary. Borrowing the concept of bistatic scattering cross section \cite{balanis2012advanced, osipov2017modern}, we begin by calculating the bistatic scattered field of a single patch. We assume throughout the paper that the incident field consists of multiple uniform plane waves (narrow banded) along with various directions. In the far-field regime, the electric and magnetic fields at any point are perpendicular to each other and the propagation direction.

\subsection{The bistatic scattered field from a rectangular patch}\label{SS:BistaticRCS}
\noindent For the convenience of modeling, this paper assumes that unit cells are approximated well by rectangular perfect electric conducting plates \cite{dai2020reconfigurable, pei2021ris, dai2019wireless, wang2019design}. We first consider a patch illuminated by one incident EM wave originating from $(\theta^i, \phi^i)$. The strength/magnitude of the incident electric field is denoted by $E^i (\theta^i, \phi^i)$. We will write it simply $E^i$ when no confusion can arise. The incident wave is linearly polarized with perpendicular polarization, as illustrated in Fig.~\ref{Fig:Scattering_OneIncident}. The scattered field can be evaluated using physical optics techniques by neglecting the edge effects (suppose the induced current density on the patch is the same as that on an infinite conducting flat plate).

\begin{figure}[!htbp]
\centering
\subfigure[]{
\label{Fig:Scattering_OneIncident}
\includegraphics[width=0.46\columnwidth]{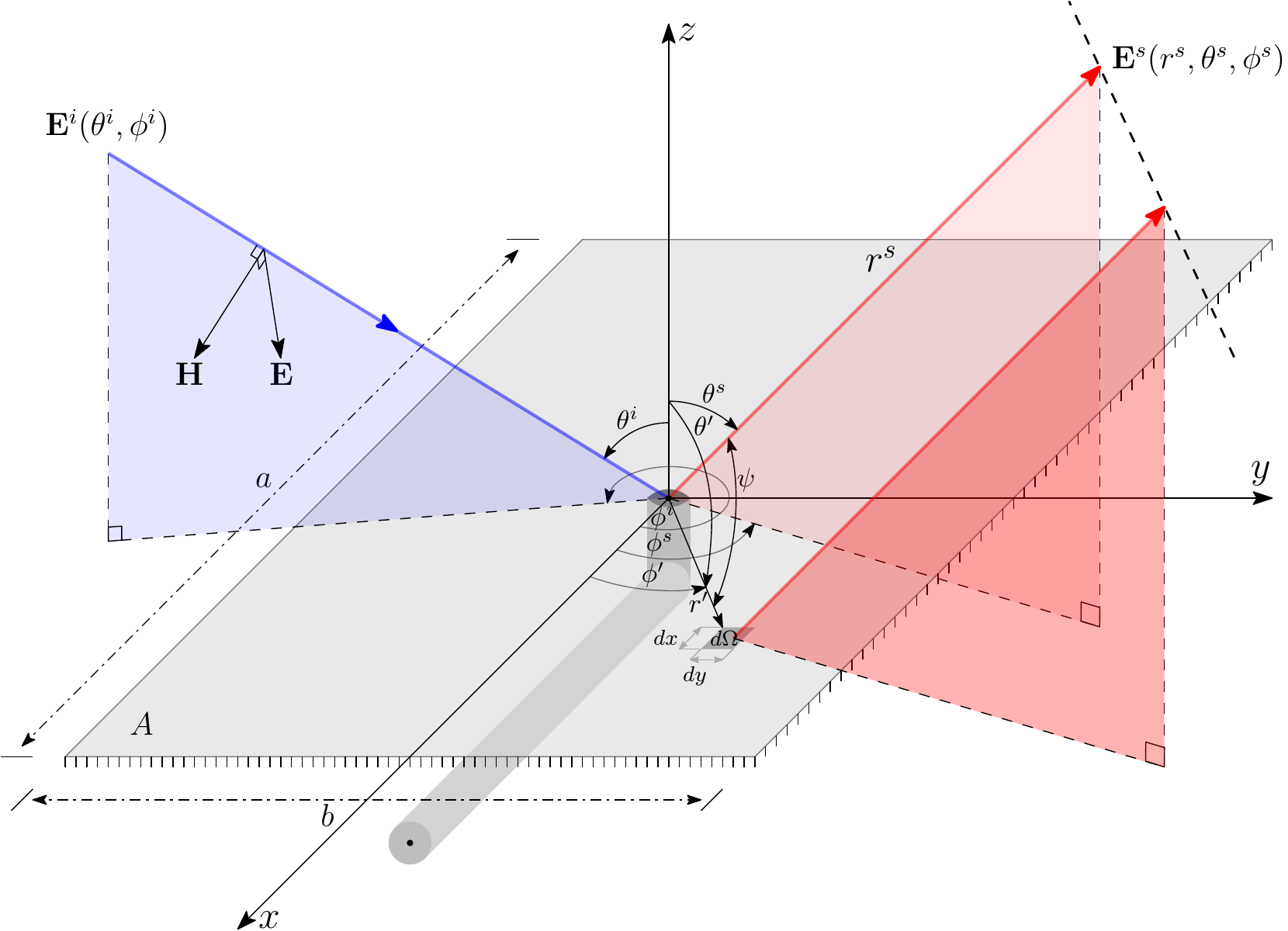}}
\hspace{0.25cm}
\subfigure[]{
\label{Fig:Scattering_MutipleIncidents}
\includegraphics[width=0.46\columnwidth]{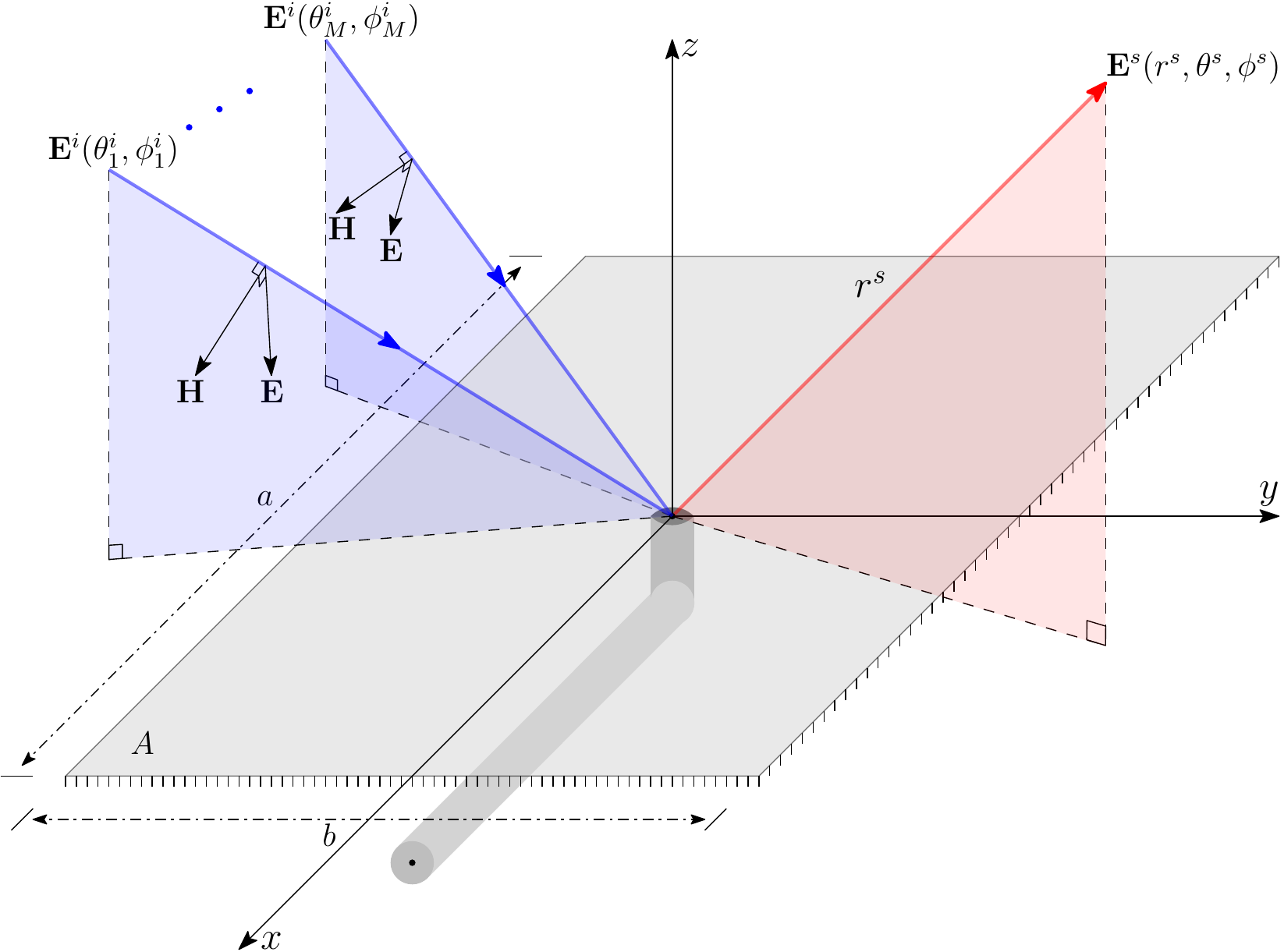}}
\caption{A rectangular metallic patch of size $a \times b$ is illuminated by EM waves, which are linearly polarized with perpendicular polarizations. The scattered electric field is observed at $(r^s, \theta^s, \phi^s)$. (a) One incident wave. (b) Muliple incident waves originating from ${ (\theta^i_1, \phi^i_1), \ldots, (\theta^i_M, \phi^i_M) }$.}
\label{Fig:Scattering}
\end{figure}

\begin{thm}\label{T:ScatterdFieldPatch}
Suppose a patch of size $a \times b$ is illuminated by one uniform plane wave $E^i (\theta^i, \phi^i)$. Let $\Gamma$ be the reflection coefficient. In the far-field regime, the $(r,\theta,\phi)$ components of the scattered electric field $\mathbf{E}^s$ at $(r^s, \theta^s, \phi^s)$ are given as
  \begin{equation}\label{E:E_s_r_theta_phi_Patch}
    \begin{dcases*}
      \begin{aligned}
      E^s_r ( r^s, \theta^s, \phi^s ) \simeq 
      & 0 , \\
      E^s_{\theta} ( r^s, \theta^s, \phi^s ) \simeq 
      & C \frac{ A }{ \lambda } \frac{ e^{-j 2 \pi r^s / \lambda } }{ r^s } E^i \cos \theta^i \cos \theta^s \bigl( \cos \phi^i \sin \phi^s - \sin \phi^i \cos\phi^s\bigr) \text{Sa} (a, b; \theta^s,\phi^s; \theta^i, \phi^i) , \\
      E^s_{\phi} ( r^s, \theta^s, \phi^s ) \simeq 
      & C \frac{ A }{ \lambda } \frac{ e^{-j 2 \pi r^s / \lambda } }{ r^s } E^i \cos \theta^i \bigl( \sin \phi^i \sin \phi^s + \cos \phi^i \cos \phi^s \bigr) \text{Sa} (a, b; \theta^s,\phi^s; \theta^i, \phi^i) ,
      \end{aligned}
    \end{dcases*}
  \end{equation}
where $A = a b$, $C = - j (1-\Gamma) / 2$, and 
\begin{equation*}
\begin{aligned}
  \text{Sa} (a, b; \theta^s,\phi^s; \theta^i, \phi^i) = 
  \frac{\sin\bigl( \frac{ \pi a }{\lambda} \bigl(\sin \theta^s \cos \phi^s + \sin \theta^i \cos \phi^i\bigr)\bigr)} { \frac{ \pi a }{\lambda} \bigl(\sin \theta^s \cos \phi^s + \sin \theta^i \cos \phi^i \bigr) } \frac{\sin \bigl( \frac{ \pi b }{\lambda} ( \sin \theta^s \sin \phi^s + \sin \theta^i \sin \phi^i ) \bigr) }{ \frac{ \pi b }{\lambda} ( \sin \theta^s \sin \phi^s + \sin \theta^i \sin \phi^i ) } .
\end{aligned}
\end{equation*} 
\end{thm}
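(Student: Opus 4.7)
The plan is to apply physical optics (PO) on the patch and then compute the far-zone radiation of the induced surface current via the vector-potential integral. Four ingredients suffice: (i) the perpendicular-polarization decomposition of the incident fields; (ii) the PO prescription $\mathbf{J}_s=(1-\Gamma)\,\mathbf{e}_z\times\mathbf{H}^i$ on the illuminated face; (iii) the far-field approximation of the scalar Green's function; and (iv) projection of the transverse radiated field onto $\mathbf{e}_\theta,\mathbf{e}_\phi$.

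First I place the patch flush in the $xy$-plane centered at the origin, so the propagation direction of the wave incident from $(\theta^i,\phi^i)$ is $\hat{\mathbf{k}}^i=-\sin\theta^i\cos\phi^i\mathbf{e}_x-\sin\theta^i\sin\phi^i\mathbf{e}_y-\cos\theta^i\mathbf{e}_z$, and the perpendicular unit vector is $\hat{\mathbf{e}}_\perp=-\sin\phi^i\mathbf{e}_x+\cos\phi^i\mathbf{e}_y$. From $\mathbf{E}^i=E^i\hat{\mathbf{e}}_\perp e^{-j\mathbf{k}^i\cdot\mathbf{r}}$ and $\mathbf{H}^i=\eta^{-1}\hat{\mathbf{k}}^i\times\mathbf{E}^i$, a single cross-product calculation yields
\begin{equation*}
\mathbf{J}_s(\mathbf{r}')=\frac{(1-\Gamma)E^i\cos\theta^i}{\eta}\,\hat{\mathbf{e}}_\perp\,e^{jk(\sin\theta^i\cos\phi^i\,x'+\sin\theta^i\sin\phi^i\,y')},
\end{equation*}
so $\mathbf{J}_s$ is collinear with $\hat{\mathbf{e}}_\perp$ and already carries the obliquity factor $\cos\theta^i$ that appears in \eqref{E:E_s_r_theta_phi_Patch}.

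Next I substitute $\mathbf{J}_s$ into the far-field vector potential $\mathbf{A}\approx(\mu/4\pi)(e^{-jkr^s}/r^s)\int_{-a/2}^{a/2}\!\int_{-b/2}^{b/2}\mathbf{J}_s(\mathbf{r}')e^{jk\hat{\mathbf{r}}^s\cdot\mathbf{r}'}\,dx'dy'$, and invoke the far-field identities $E^s_r\simeq 0$ and $\mathbf{E}^s\simeq -j\omega(A_\theta\mathbf{e}_\theta+A_\phi\mathbf{e}_\phi)$. Because $\hat{\mathbf{r}}^s\cdot\mathbf{r}'$ adds linearly to the incident phase, the double integral separates in $x'$ and $y'$ and evaluates to $ab\cdot\text{Sa}(a,b;\theta^s,\phi^s;\theta^i,\phi^i)$, reproducing the sinc kernel. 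Projecting $\hat{\mathbf{e}}_\perp$ against $\mathbf{e}_\theta=\cos\theta^s\cos\phi^s\mathbf{e}_x+\cos\theta^s\sin\phi^s\mathbf{e}_y-\sin\theta^s\mathbf{e}_z$ and $\mathbf{e}_\phi=-\sin\phi^s\mathbf{e}_x+\cos\phi^s\mathbf{e}_y$ supplies the angular patterns $\cos\theta^s(\cos\phi^i\sin\phi^s-\sin\phi^i\cos\phi^s)$ and $\sin\phi^i\sin\phi^s+\cos\phi^i\cos\phi^s$, matching the theorem line by line.

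Finally I collapse the constants: $\omega\mu/\eta=k=2\pi/\lambda$ reduces the scalar prefactor $-j\omega\mu(1-\Gamma)/(4\pi\eta)$ to $-j(1-\Gamma)/(2\lambda)$, and multiplication by $ab=A$ produces $CA/\lambda$ with $C=-j(1-\Gamma)/2$, as required. The main obstacle I anticipate is bookkeeping rather than depth: keeping the time convention $e^{+j\omega t}$ consistent between the incident-wave phase and the radiation kernel so that the two phase contributions add rather than subtract, choosing the PO prefactor $(1-\Gamma)$ appropriate for perpendicular polarization on an impedance surface, and tracking the spherical-basis signs so that $E^s_\theta$ and $E^s_\phi$ emerge with exactly the signs shown in \eqref{E:E_s_r_theta_phi_Patch}.
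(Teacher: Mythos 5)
Your proposal is correct and follows essentially the same route as the paper's Appendix~A proof: the physical-optics current $(1-\Gamma)\,\mathbf{e}_z\times\mathbf{H}^i$ on the plate, the separable far-zone radiation integral producing the $\text{Sa}(\cdot)$ kernel, and the rectangular-to-spherical projection yielding the two angular factors. The only cosmetic difference is that you work directly with the vector potential $\mathbf{A}$ while the paper routes the same computation through Balanis's auxiliary integrals $\mathbf{N}^s$ and $\mathbf{L}^s$ (with $\mathbf{L}^s=\mathbf{0}$ since $\mathbf{M}^s=\mathbf{0}$), which is an equivalent bookkeeping of the identical calculation.
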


\begin{proof}
See Appendix~\ref{S:Appendix_A}.
\end{proof}

\begin{rem}
  For the role of the term $e^{-j 2 \pi r^s / \lambda } / r^s$ in (\ref{E:E_s_r_theta_phi_Patch}), we find that $e^{-j 2 \pi r^s / \lambda }$ actually represents the propagation delay due to the distance $r^s$. It is well known that, for narrow banded signals, the propagation delays generally correspond to phase shifts. Another property of EM waves is that the electric field is expected to decrease as $1/r^s$ in free space as the distance $r^s$ increases. The term $1/r^s$ thus serves as an attenuation factor. 
\end{rem}

\begin{rem}
  As expected, the strength of the scattered electric field is proportional to plate area $A$. At the receiving stage, the patch converts the incident power density into received power by the collecting area. We emphasize that the unit cell commonly consists of sub-wavelength-scaled patterns. The actual collecting/effective area is not equal but closely related to its physical size.
\end{rem}

\begin{rem}
The term $\text{Sa} (a, b; \theta^s,\phi^s; \theta^i, \phi^i)$ essentially characterizes the intrinsic directivity of the patch. When $a \ll \lambda$ and $b \ll \lambda$, $\text{Sa} (a, b; \theta^s,\phi^s; \theta^i, \phi^i) \simeq 1$. That implies that a unit cell behaves like an isotropic point source without a strong intrinsic scattering directivity when sub-wavelength-sized (e.g., $\lambda / 5 \times \lambda / 5$ or even small). On the other hand, when $a \gg \lambda$ and $b \gg \lambda$, the patch performs specular reflection (Snell's law), although it has multiple reflection lobes. The reason is that, if we choose $\theta^s = \theta^i$ and $\phi^s = \phi^i \pm \pi$, $\text{Sa} (a, b; \theta^s,\phi^s; \theta^i, \phi^i) = 1$. 
\end{rem}

With Theorem~(\ref{T:ScatterdFieldPatch}), the strength of the scattered electric field at $(r^s, \theta^s, \phi^s)$ is 
\begin{equation}\label{E:E_TotalPatch}
  \begin{aligned}
    | \mathbf{E}^s ( r^s, \theta^s, \phi^s ) | 
    =      & \sqrt{ ( E^s_{r} ( r^s, \theta^s, \phi^s ) )^2 + ( E^s_{\theta} ( r^s, \theta^s, \phi^s ) )^2 + ( E^s_{\phi} ( r^s, \theta^s, \phi^s ) )^2 } \\
    \simeq & | C | \frac{ A }{ \lambda r^s } E^{i} ( \theta^i, \phi^i ) \cos \theta^i  \bigl| \text{Sa} (a, b; \theta^s,\phi^s; \theta^i, \phi^i) \bigr| \\
           & \sqrt{\cos^2 \theta^s \bigl( \cos \phi^i \sin\phi^s - \sin \phi^i \cos \phi^s\bigr)^2 + \bigl( \sin \phi^i \sin \phi^s + \cos \phi^i \cos \phi^s \bigr)^2 }  .
  \end{aligned}
\end{equation}
The bistatic RCS is then given as
\begin{equation}\label{E:RCS_Patch}
  \begin{aligned}
    \sigma(\theta^s, \phi^s ; \theta^i, \phi^i) 
    = & \lim_{r^s \to \infty} 4 \pi (r^s)^2 \frac{ | \mathbf{E}^s ( r^s, \theta^s, \phi^s ) |^2 }{ E^i ( \theta^i, \phi^i )^2 }                                                                                                       \\
    = & 4 \pi |C|^2 \bigl( \frac{ A }{ \lambda } \bigr)^2 \cos^2 \theta^i \bigl[ \cos^2 \theta^s \bigl( \cos \phi^i \sin\phi^s - \sin \phi^i \cos \phi^s\bigr)^2 \\
    & + \bigl( \sin \phi^i \sin \phi^s + \cos \phi^i \cos \phi^s \bigr)^2 \bigr] \text{Sa}^2 (a, b; \theta^s,\phi^s; \theta^i, \phi^i) .
  \end{aligned}
\end{equation}

Multiple comparisons are made using CST Studio Suite to evaluate the performance of the proposed models (\ref{E:E_s_r_theta_phi_Patch}) and (\ref{E:E_TotalPatch}). CST is a high-performance package for analyzing EM components and systems. The plot data from CST are exported and reproduced in Matlab for better illustration. We demonstrate in Fig.~\ref{Fig:PatchRCS_CTS} the principal plane bistatic RCS of a square patch of size $5 \lambda \times 5 \lambda$, where $f = 1/ \lambda = 3.3 \text{GHz}$, $\theta^i = 0^\circ$ and $\phi^i = 0^\circ$. The blue curves are the RCS raised by the proposed formula (\ref{E:RCS_Patch}). The dash-dotted red curves are the RCS solved by CST. The 3D RCS plots are also illustrated in Fig.~\ref{F:RCS_3D}. The comparisons show a good agreement, especially for the side lobes close to the main lobe, which validate the effectiveness of the proposed models.

\begin{figure}[!htbp]
  \centering
  \subfigure[]{
    \label{F:Fig:PatchRCS_CTS:0Deg}
    \includegraphics[width=0.4\columnwidth]{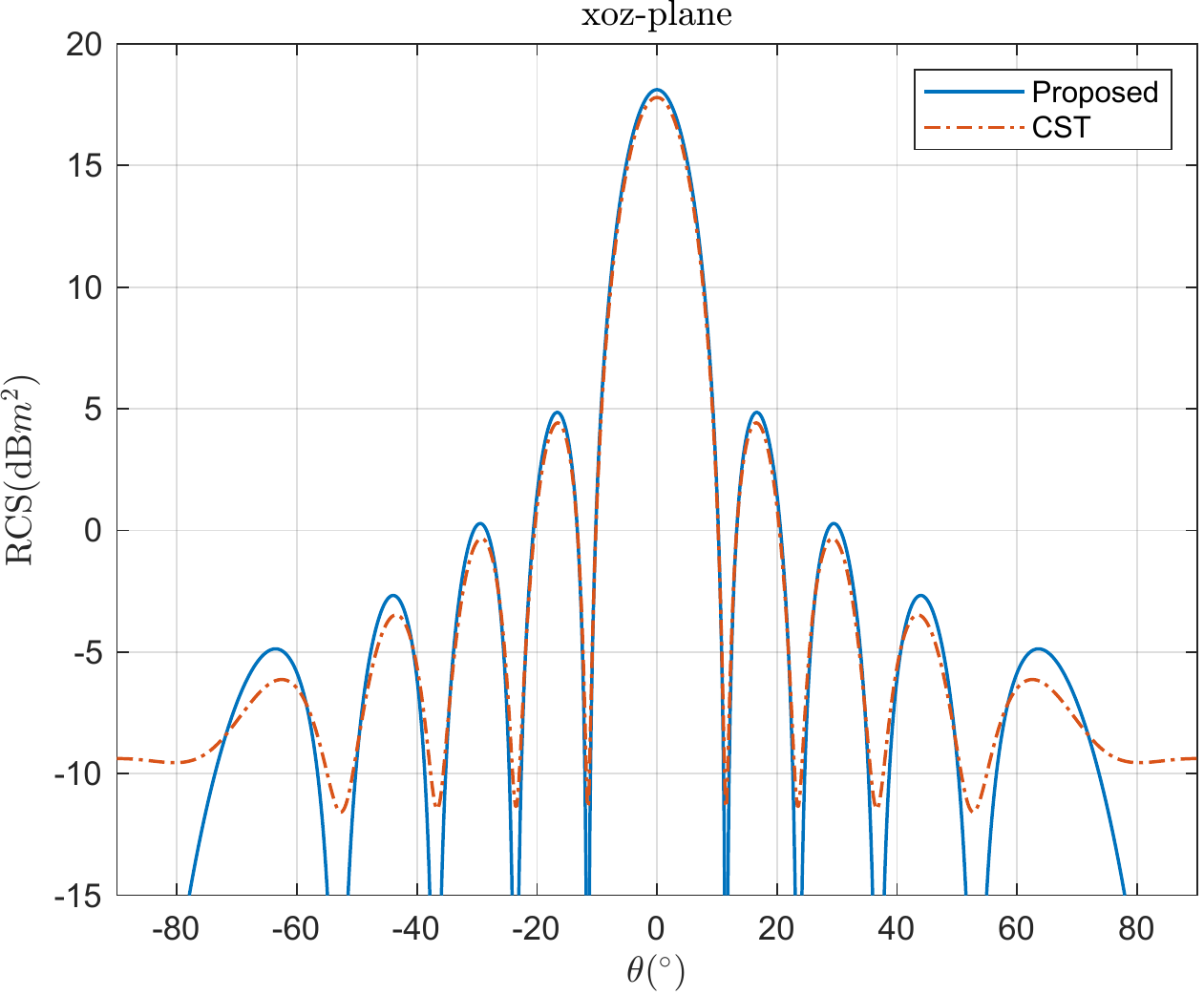}}
  \hspace{0.3cm}
  \subfigure[]{
    \label{F:Fig:PatchRCS_CTS:90Deg}
    \includegraphics[width=0.4\columnwidth]{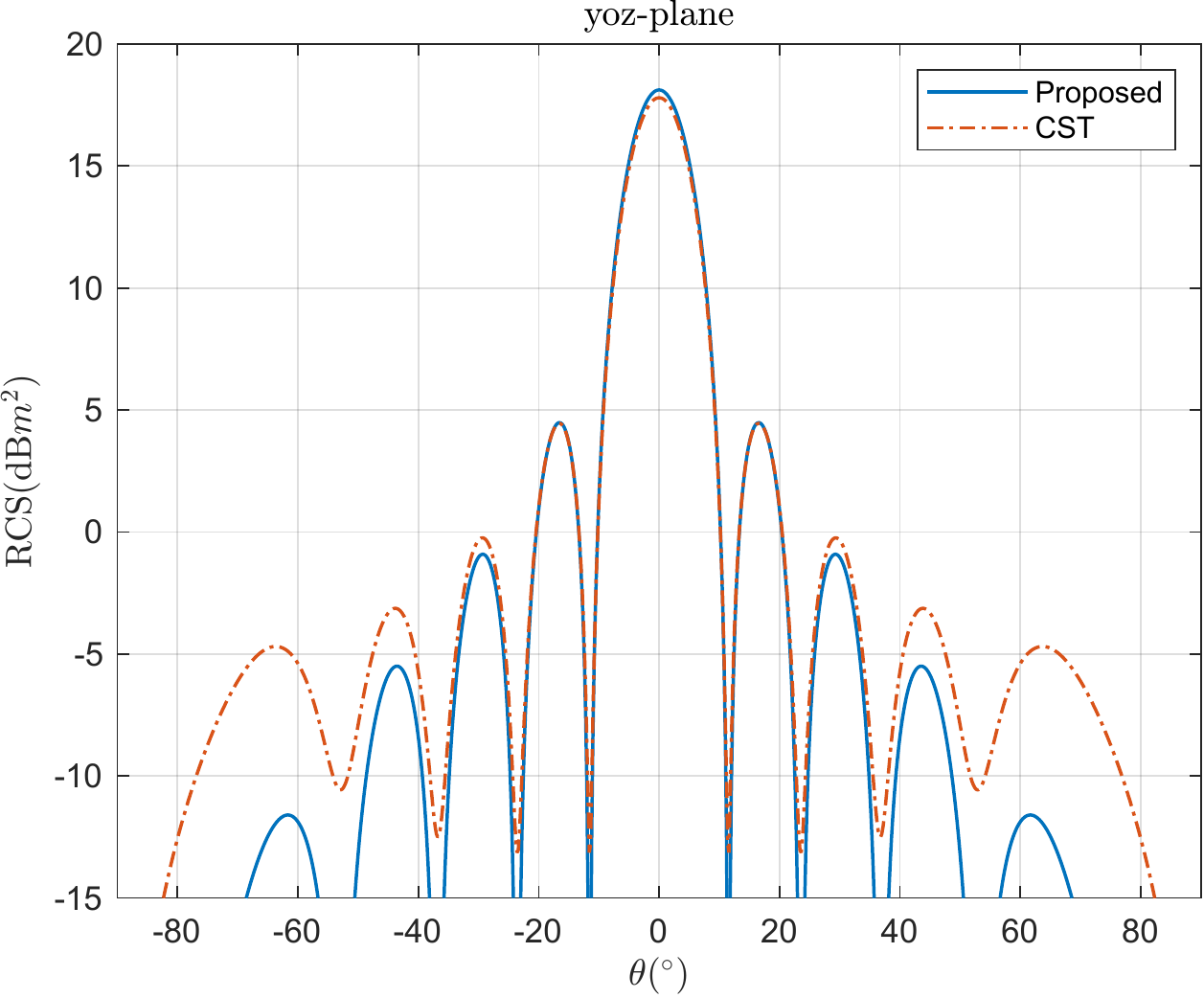}}
  \caption{Principal plane bistatic RCS of a square patch of size $5 \lambda \times 5 \lambda$. The blue curves are the RCS raised by the proposed model (\ref{E:RCS_Patch}). The dash-dotted red curves are the RCS solved by CST. The comparisons show a good agreement, especially for the side lobes close to the main lobe. (a) $xoz$-plane ($\phi^s = 0^\circ, 180^\circ$). (b) $yoz$-plane ($\phi^s = 90^\circ, 270^\circ$).}
  \label{Fig:PatchRCS_CTS}
\end{figure}

\begin{figure}[!htbp]
  \centering
  \subfigure[]{
    \label{F:F:RCS_3D:Proposed}
    \includegraphics[width=0.25\columnwidth]{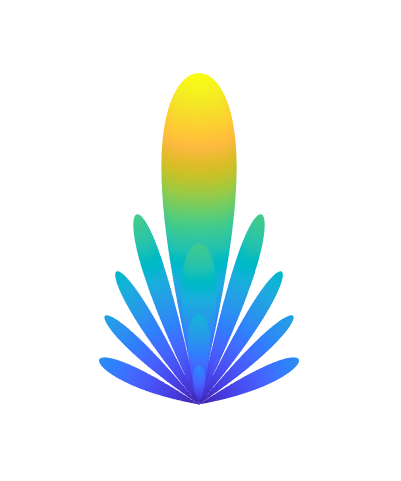}}
  \subfigure[]{
    \label{F:F:RCS_3D:CST}
    \includegraphics[width=0.25\columnwidth]{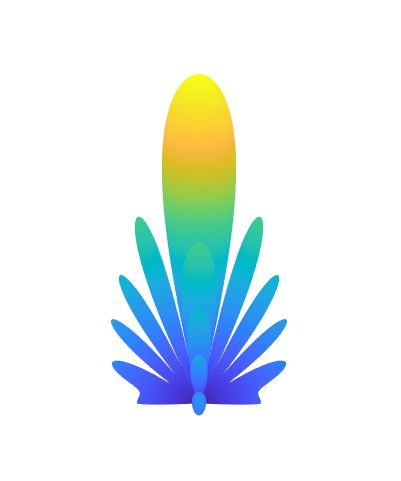}}
  \caption{The 3D bistatic RCS of a square patch. The comparisons show a good agreement, especially for the side lobes close to the main lobe. (a) The RCS raised by the proposed model (\ref{E:RCS_Patch}). (b) The RCS solved by CST.}
  \label{F:RCS_3D}
\end{figure}

\subsection{Scattering introduced by multiple incident waves}
\noindent Several EM waves may illuminate the patch simultaneously, especially when the spatially-continuous field consists of several principal incident waves, as demonstrated in Fig.~\ref{Fig:Scattering_MutipleIncidents}. The total electric current density induced on the plate can be obtained by the superposition of the individual fields related to each incident wave.

\begin{prop}
Suppose a patch of size $a \times b$ is illuminated by $M$ uniform plane waves. The strengths of the incident EM waves are $E^i (\theta^i_m, \phi^i_m), m=1, \ldots, M$. The scattered electric field observed at $(r^s, \theta^s, \phi^s)$ is given as 
\begin{equation*}
  \begin{dcases*}
    \begin{aligned}
      E^s_r ( r^s, \theta^s, \phi^s ) \simeq
       & 0 , \\
      E^s_{\theta} ( r^s, \theta^s, \phi^s ) \simeq
       & C \frac{ A }{ \lambda } \frac{ e^{-j 2 \pi r^s / \lambda } }{ r^s } \sum_{m=1}^M E^i (\theta^i_m, \phi^i_m) \cos \theta^i_m \cos \theta^s \bigl( \cos \phi^i_m \sin \phi^s - \sin \phi^i_m \cos\phi^s\bigr) \\
       & \text{Sa} (a, b; \theta^i_m, \phi^i_m; \theta^s, \phi^s) , \\
      E^s_{\phi} ( r^s, \theta^s, \phi^s ) \simeq
       & C \frac{ A }{ \lambda } \frac{ e^{-j 2 \pi r^s / \lambda } }{ r^s } \sum_{m=1}^M E^i (\theta^i_m, \phi^i_m) \cos \theta^i_m \bigl( \sin \phi^i_m \sin \phi^s + \cos \phi^i_m \cos \phi^s \bigr) \\
       & \text{Sa} (a, b; \theta^i_m, \phi^i_m; \theta^s, \phi^s) .
    \end{aligned}
  \end{dcases*}
\end{equation*}
\end{prop}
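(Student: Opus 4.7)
The plan is to exploit the linearity of Maxwell's equations and of the physical optics approximation used in the proof of Theorem~\ref{T:ScatterdFieldPatch}. In that proof, the scattered field is computed from the induced surface current density on the plate, which under physical optics is given by $\mathbf{J}_s = 2\,\hat{\mathbf{n}} \times \mathbf{H}^i$ and therefore depends linearly on the incident magnetic field. Since the total incident field of $M$ plane waves is by superposition the sum of the individual incident fields, the total induced current density decomposes as the sum of the $M$ currents that would be induced by each wave in isolation.

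First, I would write the total incident field as a sum $\mathbf{E}^i_{\mathrm{tot}} = \sum_{m=1}^M \mathbf{E}^i_m$, where $\mathbf{E}^i_m$ denotes the uniform plane wave of strength $E^i(\theta^i_m,\phi^i_m)$ arriving from $(\theta^i_m,\phi^i_m)$, with the corresponding decomposition of $\mathbf{H}^i_{\mathrm{tot}}$ and hence of $\mathbf{J}_s$. Next, I would invoke the fact that the scattered field is obtained from $\mathbf{J}_s$ via the radiation integral (magnetic vector potential), an operation that is linear in $\mathbf{J}_s$. Consequently, the total scattered field splits as a sum of $M$ contributions, each of which is precisely the scattered field produced by a single plane wave with parameters $(\theta^i_m,\phi^i_m)$ and magnitude $E^i(\theta^i_m,\phi^i_m)$.

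Finally, I would apply Theorem~\ref{T:ScatterdFieldPatch} to each of these $M$ single-wave scattering problems to obtain closed-form expressions for their far-field $(r,\theta,\phi)$ components. Summing the resulting expressions over $m=1,\ldots,M$ immediately yields the formulas stated in the proposition, since the common prefactor $C \frac{A}{\lambda}\frac{e^{-j 2\pi r^s/\lambda}}{r^s}$ does not depend on the incident direction and can be pulled outside the sum, while all direction-dependent terms, including the intrinsic directivity factor $\mathrm{Sa}(a,b;\theta^i_m,\phi^i_m;\theta^s,\phi^s)$, are indexed by $m$ inside the summand.

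The main subtlety, rather than an outright obstacle, is justifying that the superposition step is actually legitimate under the physical optics approximation: one has to note that this approximation is applied identically to each incoming wave (each sees the same infinite tangent plate) and is linear in the incident field, so no nonlinear coupling between the $M$ contributions appears. The edge-effect neglect is made wave by wave and does not introduce any cross terms, which is what makes the clean sum-of-Theorem-\ref{T:ScatterdFieldPatch} formula possible.
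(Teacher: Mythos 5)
Your proposal is correct and matches the paper's approach: the paper likewise justifies the proposition by superposing the induced surface current densities from each incident wave and then summing the single-wave result of Theorem~\ref{T:ScatterdFieldPatch} over $m$. The only cosmetic difference is that you write the physical-optics current as $2\,\hat{\mathbf{n}}\times\mathbf{H}^i$ (the PEC case $\Gamma=-1$), whereas the paper keeps the general form $(1-\Gamma)\,\hat{\mathbf{n}}\times\mathbf{H}^i$, which is still linear in the incident field and so does not affect your argument.
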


We illustrate in Fig.~\ref{F:PatchResponse} the scattered electric fields of a square patch of size $5 \lambda \times 5 \lambda$ illuminated by two incident waves originating from $(\theta^i_1 = 15^\circ, \phi^i_1 = -45^\circ)$ and $(\theta^i_2 = 45^\circ, \phi^i_2 = 135^\circ)$, respectively. Let $E^i (\theta^i_1, \phi^i_1) = 1$ and $E^i (\theta^i_2, \phi^i_2) = 0.5$. The normalized scattered electric field is plotted in Fig.~\ref{F:PatchResponse:2D}. The 3D scattered field is plotted in Fig.~\ref{F:PatchResponse:3D}. For better illustration, we also plot the incident uniform plane waves as two spikes. These results indicate that a patch of large size performs specular reflections.

\begin{figure}[!htbp]
  \centering
  \subfigure[]{
    \label{F:PatchResponse:2D}
    \includegraphics[width=0.46\columnwidth]{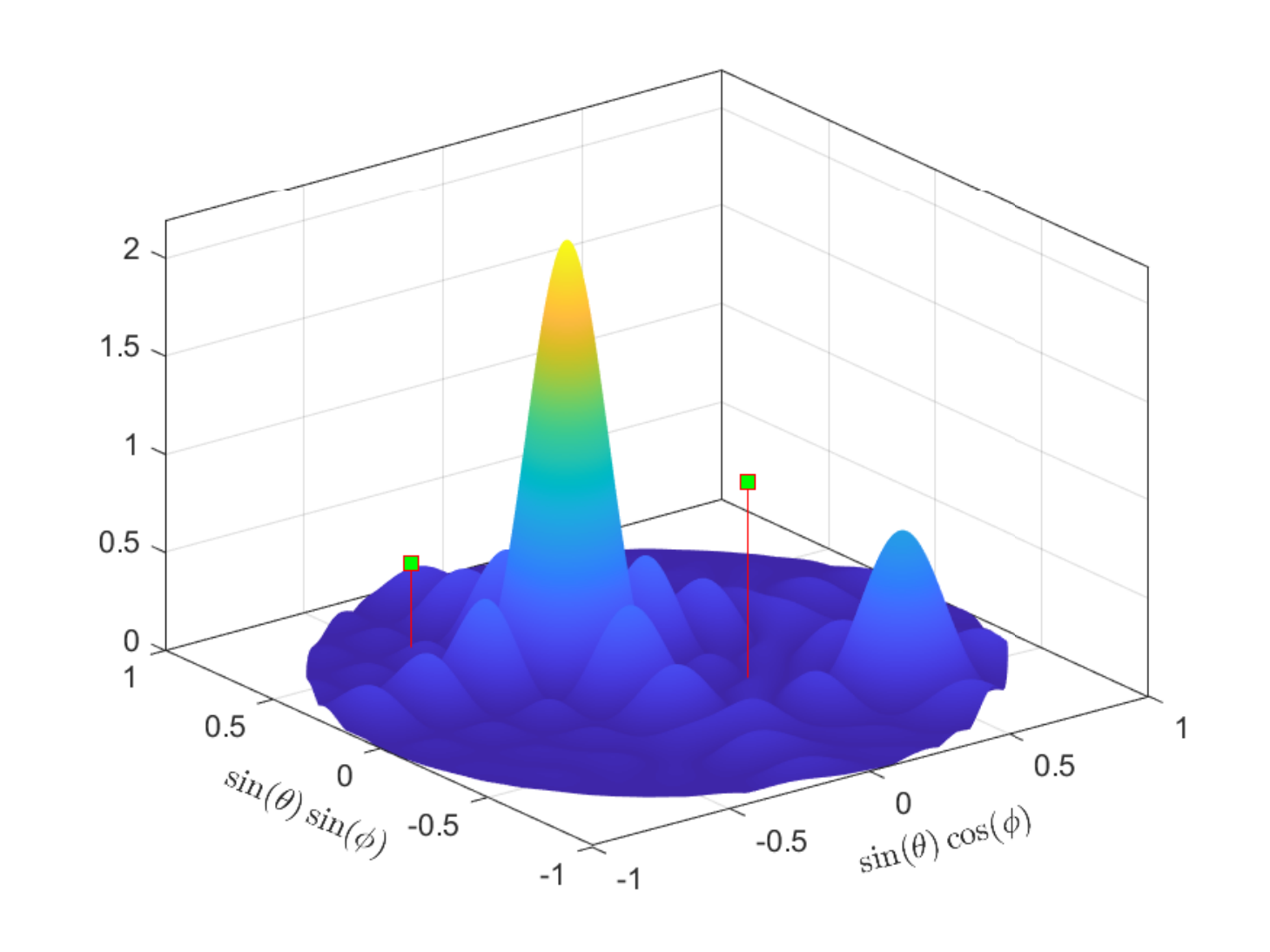}}
  \subfigure[]{
    \label{F:PatchResponse:3D}
    \includegraphics[width=0.46\columnwidth]{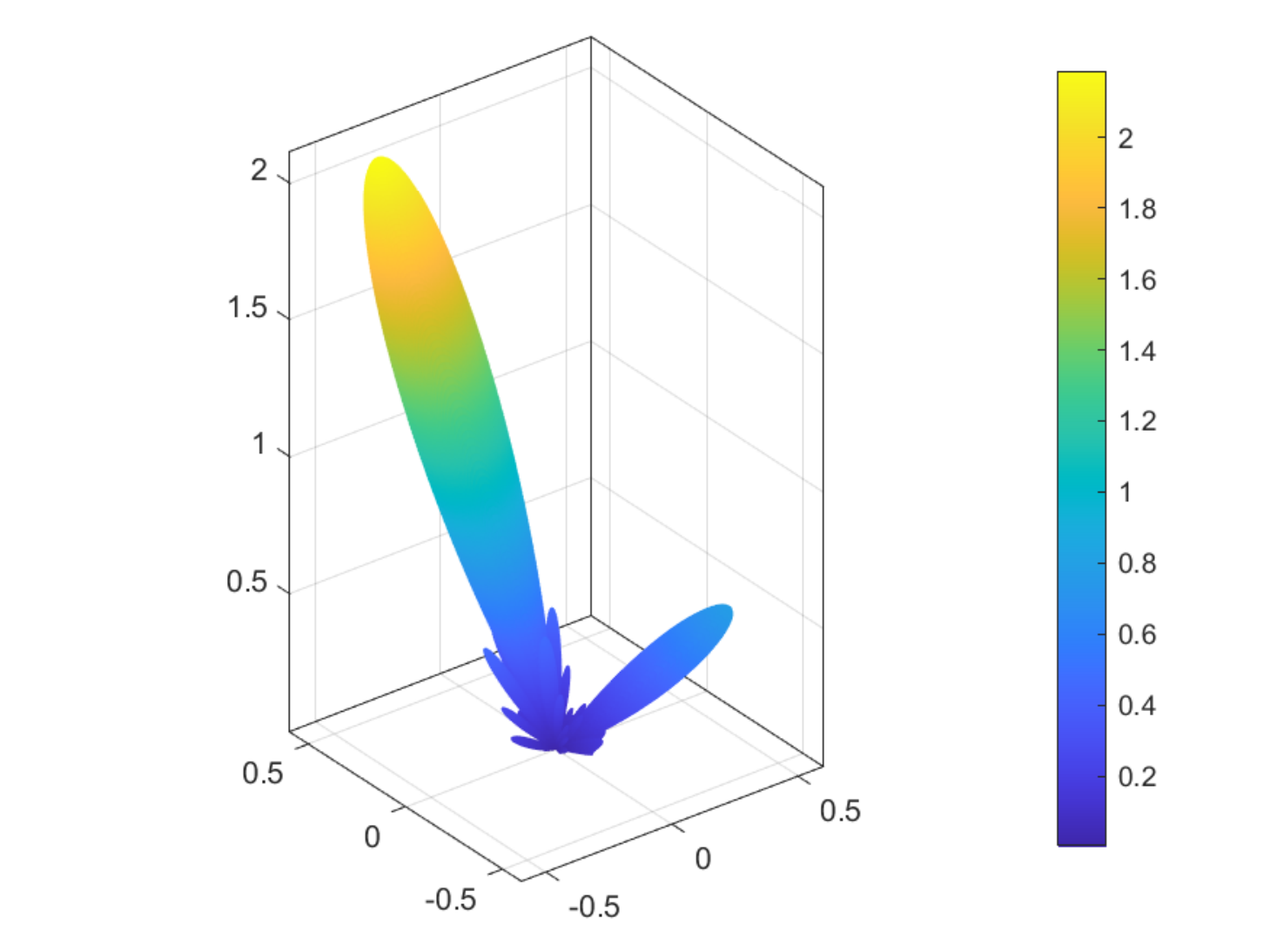}}
  \caption{The scattered electric fields of a square patch of size $5 \lambda \times 5 \lambda$ illuminated by two incident waves originating from $(\theta^i_1 = 15^\circ, \phi^i_1 = -45^\circ)$ and $(\theta^i_2 = 45^\circ, \phi^i_2 = 135^\circ)$, respectively. (a) The normalized scattered electric field. For better illustration, the two incident uniform plane waves are plotted as two spikes. (b) The 3D scattered electric field.}
  \label{F:PatchResponse}
\end{figure}

\section{Phase discrepancy due to multiple unit cells}\label{S:ResponseArray}
\noindent This section focuses on the interaction of patch arrays with external illuminations. For ease of description, we limit the discussion to RISs consisting of multiple rectangular metallic patches with near-zero thickness. We emphasize that the phase discrepancies corresponding to unit cells are essential for various beamforming functionalities. There are two equivalent ways to introduce phase discrepancies. One is the phase shifting due to the configuration. The other is the interelement path length difference, closely related to the topology and geometry of RISs.

A Cartesian coordinate system is used to describe the topology and geometry. Assume that a planar or even conformal RIS, consisting of multiple rectangular metallic patches located at $\mathbf{p}_n = [x_n, y_n, z_n ]^\top$, $n=1, \ldots, N$, is illuminated by one uniform plane wave originating from $(\theta^i, \phi^i)$. We first address the phase discrepancy related to the $n$-th unit cell, whose collecting area is $A_n$.

Suppose there is a virtual unit cell located at the origin. We denote by $\mathbf{E}_{\mathbf{o}}^s ( r^s, \theta^s, \phi^s; t )$ the scattered time-varying electric field through this virtual patch. Let $d_{\mathbf{p}_n} = d_{\mathbf{p}_n}^i + d_{\mathbf{p}_n}^s$ stand for the path length difference between the origin and $\mathbf{p}_n$, as illustrated in Fig.~\ref{Fig:Scattering_MultipleUnits_3D}. With the narrowband assumption, the scattered electric field of the unit cell at $\mathbf{p}_n$ can be written as
\begin{equation}\label{E:ScatteringDelay}
\begin{aligned}
\mathbf{E}_{\mathbf{p}_n}^s ( r^s, \theta^s, \phi^s; t ) 
= \mathbf{E}_{\mathbf{o}}^s ( r^s, \theta^s, \phi^s; t - d_{\mathbf{p}_n} / c )
\simeq \mathbf{E}_{\mathbf{o}}^s ( r^s, \theta^s, \phi^s; t ) e^{ - j 2 \pi d_{\mathbf{p}_n} / \lambda } .
\end{aligned}
\end{equation}
As discussed in Section~\ref{SS:BistaticRCS}, the complex exponential term $e^{ - j 2 \pi d_{\mathbf{p}_n} / \lambda }$ is the propagation delay created by the incident and scattered path length difference.

To calculate $d^i_{\mathbf{p}_n}$ and $d^s_{\mathbf{p}_n}$, let $\mathbf{u} (\theta,\phi) = [\sin \theta  \cos \phi, \sin \theta  \sin \phi, \cos \theta]^\top$. We can figure out 
\begin{equation*}
  \begin{dcases*}
  \begin{aligned}
    d_{\mathbf{p}_n}^{i} = & - \bigl( \mathbf{e}_x x_n + \mathbf{e}_y y_n + \mathbf{e}_z z_n \bigr) \cdot \bigl( \mathbf{e}_x \sin \theta^i \cos \phi^i + \mathbf{e}_y \sin \theta^i \sin \phi^i + \mathbf{e}_z \cos \theta^i \bigr) 
                         = - \mathbf{p}_n^\top \mathbf{u} ( \theta^{i},\phi^{i} ) , \\
    d_{\mathbf{p}_n}^{s} = & - \bigl( \mathbf{e}_x x_n + \mathbf{e}_y y_n + \mathbf{e}_z z_n \bigr) \cdot \bigl( \mathbf{e}_x \sin \theta^s \cos \phi^s + \mathbf{e}_y \sin \theta^s \sin \phi^s + \mathbf{e}_z \cos \theta^s \bigr) 
                         = - \mathbf{p}_n^\top \mathbf{u} ( \theta^{s},\phi^{s} ) .
  \end{aligned}
  \end{dcases*}
\end{equation*}
Combining (\ref{E:E_s_r_theta_phi_Patch}) and (\ref{E:ScatteringDelay}) yields the $(r, \theta, \phi)$ components of the scattered electric field of the unit cell at $\mathbf{p}_n$
\begin{equation*}
  \begin{dcases*}
    \begin{aligned}
      E^s_{\mathbf{p}_n, r} ( r^s, \theta^s, \phi^s ) \simeq & 0 , \\
      E^s_{\mathbf{p}_n, \theta} ( r^s, \theta^s, \phi^s ) \simeq 
      & C \frac{ A_n }{ \lambda } \frac{ e^{-j 2 \pi r^s / \lambda } }{ r^s } E^{i} ( \theta^i, \phi^i ) \cos \theta^i \cos \theta^s \bigl( \cos \phi^i \sin \phi^s - \sin \phi^i \cos\phi^s\bigr) \\
      & \text{Sa} (a_n, b_n; \theta^s,\phi^s; \theta^i, \phi^i) e^{ j 2 \pi \mathbf{p}_n^\top \mathbf{u} ( \theta^{i},\phi^{i} ) / \lambda } e^{ j 2 \pi \mathbf{p}_n^\top \mathbf{u} ( \theta^{s},\phi^{s} ) / \lambda } , \\
      E^s_{\mathbf{p}_n, \phi} ( r^s, \theta^s, \phi^s ) \simeq 
      & C \frac{ A_n }{ \lambda } \frac{ e^{-j 2 \pi r^s / \lambda } }{ r^s } E^{i} ( \theta^i, \phi^i ) \cos \theta^i \bigl( \sin \phi^i \sin \phi^s + \cos \phi^i \cos \phi^s \bigr) \\
      & \text{Sa} (a_n, b_n; \theta^s,\phi^s; \theta^i, \phi^i) e^{ j 2 \pi \mathbf{p}_n^\top \mathbf{u} ( \theta^{i},\phi^{i} ) / \lambda }  e^{ j 2 \pi \mathbf{p}_n^\top \mathbf{u} ( \theta^{s},\phi^{s} ) / \lambda } .
    \end{aligned}
  \end{dcases*}
\end{equation*}

\begin{figure}[!htbp]
  \centering
  \includegraphics[width=0.55\columnwidth]{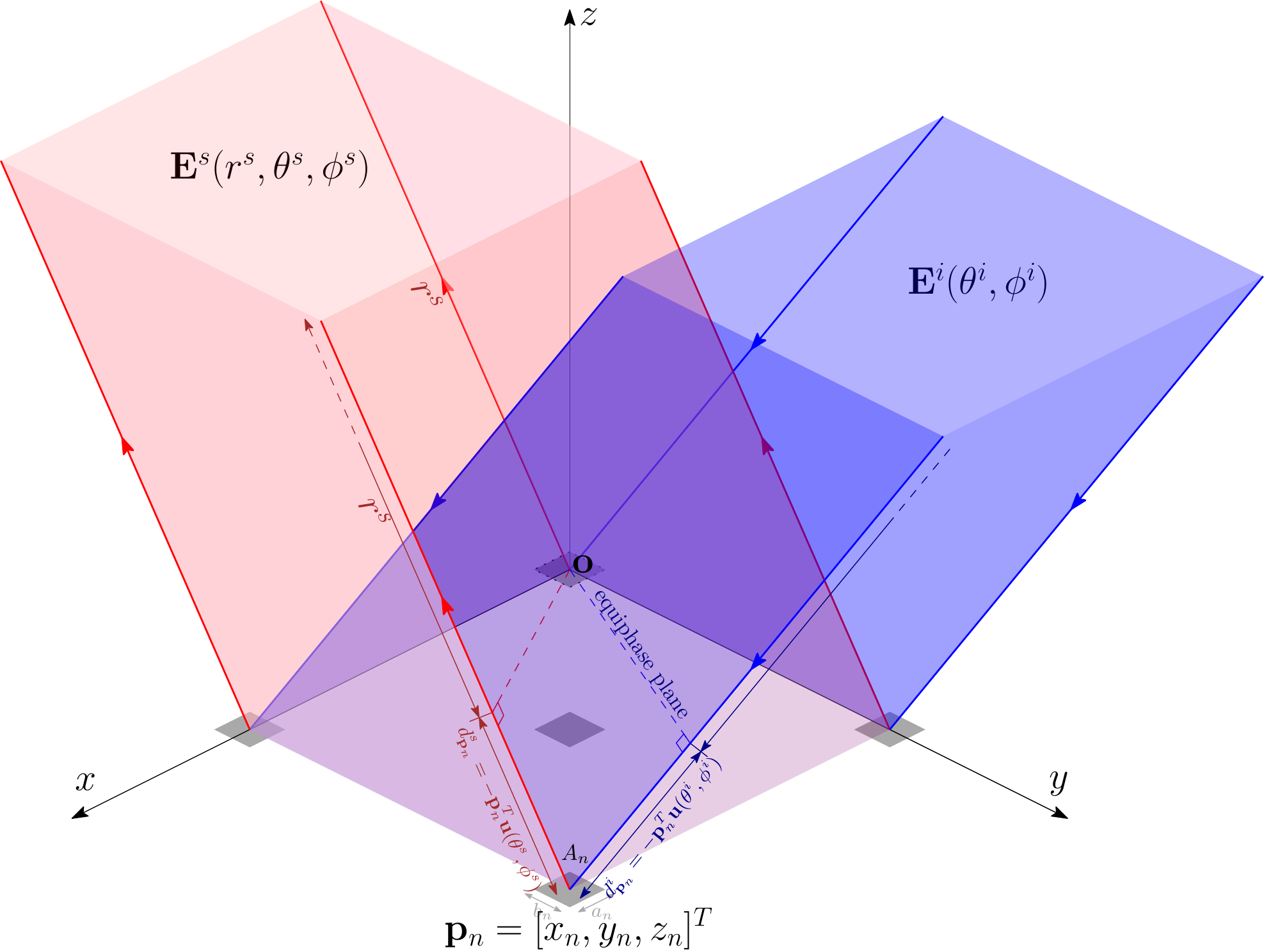}
  \caption{Phase discrepancy due to the interelement path length difference. A RIS is illuminated by one uniform plane wave (in blue) originating from $(\theta^i, \phi^i)$. The scattered electric field is observed at $(r^s, \theta^s, \phi^s)$ (in red). The incident and scattered path length differences between the origin and $\mathbf{p}_n$ are $d_{\mathbf{p}_n}^{i} = - \mathbf{p}_n^\top \mathbf{u} ( \theta^{i},\phi^{i} )$ and $d_{\mathbf{p}_n}^{s} = - \mathbf{p}_n^\top \mathbf{u} ( \theta^{s},\phi^{s} )$, respectively. }
  \label{Fig:Scattering_MultipleUnits_3D}
\end{figure}

On the other hand, the sub-wavelength-scaled pattern of each unit cell is used to change the value of the surface current density. We denote by $e^{j \Omega_n}, n=1, \ldots, N$, the phase shifting operations. The total electric field can be calculated by the superposition of the individual fields.

\begin{prop}
Suppose a RIS, consisting of multiple rectangular patches located at $\mathbf{p}_n = [x_n, y_n, z_n ]^\top$, $n=1, \ldots, N$, is illuminated by one uniform plane wave $E^i (\theta^i, \phi^i)$. The scattered electric field observed at $(r^s, \theta^s, \phi^s)$ is given as 
\begin{equation}\label{E:ScatteringArrayOneIncident}
  \begin{dcases*}
    \begin{aligned}
      E^s_{r} ( r^s, \theta^s, \phi^s ) \simeq & 0 , \\
      E^s_{\theta} ( r^s, \theta^s, \phi^s ) \simeq
                                               & C \frac{ e^{-j 2 \pi r^s / \lambda } }{ r^s }  E^{i} ( \theta^i, \phi^i ) \cos \theta^i \cos \theta^s \bigl( \cos \phi^i \sin \phi^s - \sin \phi^i \cos\phi^s\bigr)  \\
                                               & \sum_{n=1}^N \frac{A_n}{\lambda} e^{j \Omega_n} \text{Sa} (a_n, b_n; \theta^s,\phi^s; \theta^i, \phi^i) e^{ j 2 \pi \mathbf{p}_n^\top \mathbf{u} ( \theta^{i},\phi^{i} ) / \lambda }  e^{ j 2 \pi \mathbf{p}_n^\top \mathbf{u} ( \theta^{s},\phi^{s} ) / \lambda } , \\
      E^s_{\phi} ( r^s, \theta^s, \phi^s ) \simeq
                                               & C \frac{ e^{-j 2 \pi r^s / \lambda } }{ r^s }  E^{i} ( \theta^i, \phi^i ) \cos \theta^i \bigl( \sin \phi^i \sin \phi^s + \cos \phi^i \cos \phi^s \bigr) \\
                                               & \sum_{n=1}^N \frac{A_n}{\lambda} e^{j \Omega_n} \text{Sa} (a_n, b_n; \theta^s,\phi^s; \theta^i, \phi^i) e^{ j 2 \pi \mathbf{p}_n^\top \mathbf{u} ( \theta^{i},\phi^{i} ) / \lambda }  e^{ j 2 \pi \mathbf{p}_n^\top \mathbf{u} ( \theta^{s},\phi^{s} ) / \lambda }.
    \end{aligned}
  \end{dcases*}
\end{equation}
\end{prop}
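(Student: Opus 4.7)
The plan is to assemble the array response from Theorem~\ref{T:ScatterdFieldPatch} via three ingredients: the phase delay produced by the interelement path length differences (already identified in equation~(\ref{E:ScatteringDelay})), the configuration-induced phase $e^{j\Omega_n}$, and the superposition principle applied to the linear Maxwell system. Each ingredient has been set up in the preceding discussion, so the proof is largely a careful assembly rather than a new calculation.

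First, I would treat the $n$-th patch as a translated copy of a reference patch sitting at the origin. Applying Theorem~\ref{T:ScatterdFieldPatch} with dimensions $a_n \times b_n$ and collecting area $A_n = a_n b_n$ yields the scattered field that the reference copy would generate: the polarization factors in $\theta^s,\phi^s,\theta^i,\phi^i$ and the directivity term $\text{Sa}(a_n,b_n;\theta^s,\phi^s;\theta^i,\phi^i)$ carry over unchanged, since translation does not alter the local induced current's directional characteristics. I would then invoke (\ref{E:ScatteringDelay}) to encode the narrowband approximation: a time shift of $d_{\mathbf{p}_n}/c$ becomes the multiplicative phasor $e^{-j 2\pi d_{\mathbf{p}_n}/\lambda}$, and substituting $d_{\mathbf{p}_n}^i = -\mathbf{p}_n^\top \mathbf{u}(\theta^i,\phi^i)$ and $d_{\mathbf{p}_n}^s = -\mathbf{p}_n^\top \mathbf{u}(\theta^s,\phi^s)$ (derived just above the statement) produces exactly the pair of exponentials appearing in (\ref{E:ScatteringArrayOneIncident}).

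Second, I would append the configuration factor $e^{j\Omega_n}$, which is justified by interpreting the sub-wavelength pattern of each unit cell as a local modulation of the surface current density that, to leading order, multiplies the scattered field by a complex scalar independent of the observation angle. At this stage the contribution of patch $n$ is fully written down. Because Maxwell's equations are linear and the patches are treated as independent scatterers under the physical optics approximation, the total scattered field is the sum of the individual contributions, and summing over $n=1,\ldots,N$ and factoring out the quantities that do not depend on $n$ (the attenuation factor $e^{-j 2\pi r^s/\lambda}/r^s$, the incident strength $E^i(\theta^i,\phi^i)$, the common $\cos\theta^i$ and the polarization combinations) produces (\ref{E:ScatteringArrayOneIncident}) in its stated form.

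The main obstacle I expect is not algebraic but interpretive: the careful justification of two tacit simplifications. First, replacing the patch-dependent distance $r_n^s$ by a common $r^s$ in the $1/r^s$ amplitude factor while simultaneously retaining the phase contribution $e^{j 2\pi \mathbf{p}_n^\top \mathbf{u}(\theta^s,\phi^s)/\lambda}$ extracted from the very same Taylor expansion requires an honest far-field argument: the first-order (linear in $\mathbf{p}_n$) correction is decisive for phase because it is measured in units of $\lambda$, but negligible for amplitude because it is measured against $r^s \gg \|\mathbf{p}_n\|$. Second, neglecting mutual coupling and multiple scattering between patches must be flagged explicitly as an assumption of the physical optics framework, since without it the currents on neighboring patches would couple and the plain summation would not be valid. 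Once these two points are acknowledged, the result follows by direct assembly of the ingredients collected above.
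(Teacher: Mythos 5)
Your proposal is correct and follows essentially the same route as the paper: the paper likewise obtains the per-patch field by combining the single-patch result (\ref{E:E_s_r_theta_phi_Patch}) with the narrowband delay relation (\ref{E:ScatteringDelay}) and the explicit expressions $d_{\mathbf{p}_n}^{i} = -\mathbf{p}_n^\top \mathbf{u}(\theta^i,\phi^i)$, $d_{\mathbf{p}_n}^{s} = -\mathbf{p}_n^\top \mathbf{u}(\theta^s,\phi^s)$, appends the configuration phasor $e^{j\Omega_n}$, and sums by superposition. Your added remarks on retaining the linear-in-$\mathbf{p}_n$ term in phase but not in amplitude, and on neglecting mutual coupling, make explicit two assumptions the paper leaves tacit, but they do not change the argument.
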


The strength of the scattered electric field at $(r^s, \theta^s, \phi^s)$ is 
\begin{equation}\label{E:Strength_RIS}
  \begin{aligned}
           | \mathbf{E}^s ( r^s, \theta^s, \phi^s ) |
    =      & \sqrt{ ( E^s_{r} ( r^s, \theta^s, \phi^s ) )^2 + ( E^s_{\theta} ( r^s, \theta^s, \phi^s ) )^2 + ( E^s_{\phi} ( r^s, \theta^s, \phi^s ) )^2 } \\
    \simeq & | C | \frac{ E^{i} ( \theta^i, \phi^i ) }{ r^s } \cos \theta^i \sqrt{\cos^2 \theta^s \bigl( \cos \phi^i \sin\phi^s - \sin \phi^i \cos \phi^s\bigr)^2 + \bigl( \sin \phi^i \sin \phi^s + \cos \phi^i \cos \phi^s \bigr)^2 } \\
           & \Bigl| \sum_{n=1}^N \frac{A_n}{\lambda} e^{j \Omega_n} \text{Sa} (a_n, b_n; \theta^s,\phi^s; \theta^i, \phi^i) e^{ j 2 \pi \mathbf{p}_n^\top \mathbf{u} ( \theta^{i},\phi^{i} ) / \lambda }  e^{ j 2 \pi \mathbf{p}_n^\top \mathbf{u} ( \theta^{s},\phi^{s} ) / \lambda } \Bigr|.
  \end{aligned}
\end{equation}
The bistatic RCS is given as 
\begin{equation}\label{E:RCS_GeneralRIS}
  \begin{aligned}
    \sigma (\theta^s, \phi^s; \theta^i, \phi^i) 
    = & \lim_{r^s \to \infty} 4 \pi (r^s)^2 \frac{ | \mathbf{E}^s ( r^s, \theta^s, \phi^s ) |^2 }{ E^i ( \theta^i, \phi^i )^2 }                                                                                                                                                      \\
    \simeq & 4 \pi | C |^2 \cos^2 \theta^i \bigl[ \cos^2 \theta^s \bigl( \cos \phi^i \sin\phi^s - \sin \phi^i \cos \phi^s\bigr)^2 + \bigl( \sin \phi^i \sin \phi^s + \cos \phi^i \cos \phi^s \bigr)^2 \bigr]                                                                                    \\
      & \Bigl| \sum_{n=1}^N \frac{ A_n }{ \lambda } e^{j \Omega_n} \text{Sa} (a_n, b_n; \theta^s,\phi^s; \theta^i, \phi^i) e^{ j 2 \pi \mathbf{p}_n^\top \mathbf{u} ( \theta^{i},\phi^{i} ) / \lambda }  e^{ j 2 \pi \mathbf{p}_n^\top \mathbf{u} ( \theta^{s},\phi^{s} ) / \lambda } \Bigr|^2 .
  \end{aligned}
\end{equation}

\begin{rem}
  The proposed equation (\ref{E:Strength_RIS}) exhibits the inter-unit cooperative beamforming gain, the most crucial feature of RIS. The beamforming gain can compensate for the power loss over the distance. A general rule of thumb from (\ref{E:Strength_RIS}) is that if the scattered strength at the distance $r^s$ is of the same order as the incident strength, we must have $\lambda r^s / \sum A_n = \mathcal{O} (1)$. 
\end{rem}

When there exist multiple incident waves illuminate the RIS, combining the individual scattered fields yields the overall scattered electric field.

\begin{prop}\label{T:ScatterdFieldRISMultipleIncident}
Suppose a RIS, consisting of multiple rectangular patches located at $\mathbf{p}_n = [x_n, y_n, z_n ]^\top$, $n=1, \ldots, N$, is illuminated by multiple uniform plane waves. The strengths of the incident EM waves are $E^i (\theta^i_m, \phi^i_m), m=1, \ldots, M$. The scattered electric field observed at $(r^s, \theta^s, \phi^s)$ is given as 
\begin{equation}\label{E:ScatteringArrayMultipleIncident}
  \begin{dcases*}
    \begin{aligned}
      E^s_{r} ( r^s, \theta^s, \phi^s ) \simeq 
      & 0 , \\
      E^s_{\theta} ( r^s, \theta^s, \phi^s ) \simeq 
      & C \frac{ e^{-j 2 \pi r^s / \lambda } }{ r^s } \sum_{m=1}^{M} E^{i} ( \theta^i_m, \phi^i_m ) \cos \theta^i_m \cos \theta^s \bigl( \cos \phi^i_m \sin \phi^s - \sin \phi^i_m \cos \phi^s \bigr) \\
      & \sum_{n=1}^N \frac{ A_n }{ \lambda } e^{j \Omega_n} \text{Sa} (a_n, b_n; \theta^i_m, \phi^i_m; \theta^s, \phi^s) e^{ j 2 \pi \mathbf{p}_n^\top \mathbf{u} ( \theta^{i}_m, \phi^{i}_m ) / \lambda }  e^{ j 2 \pi \mathbf{p}_n^\top \mathbf{u} ( \theta^{s},\phi^{s} ) / \lambda } , \\
      E^s_{\phi} ( r^s, \theta^s, \phi^s ) \simeq 
      & C \frac{ e^{-j 2 \pi r^s / \lambda } }{ r^s } \sum_{m=1}^{M} E^{i} ( \theta^i_m, \phi^i_m ) \cos \theta^i_m \bigl( \sin \phi^i_m \sin \phi^s + \cos \phi^i_m \cos \phi^s \bigr)  \\
      & \sum_{n=1}^N \frac{A_n}{\lambda} e^{j \Omega_n} \text{Sa} (a_n, b_n; \theta^i_m, \phi^i_m; \theta^s, \phi^s) e^{ j 2 \pi \mathbf{p}_n^\top \mathbf{u} ( \theta^{i}_m, \phi^{i}_m ) / \lambda } e^{ j 2 \pi \mathbf{p}_n^\top \mathbf{u} ( \theta^{s},\phi^{s} ) / \lambda }.
    \end{aligned}
  \end{dcases*}
\end{equation}
\end{prop}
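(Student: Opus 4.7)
The plan is to combine the single-incidence array formula \eqref{E:ScatteringArrayOneIncident} with the superposition argument that was already used at the single-patch level in the preceding Proposition on multiple incident waves. Because free-space Maxwell's equations are linear and each patch is treated as a linear scatterer whose induced surface current depends linearly on the tangential component of the total incident field, the scattered electric field produced by $M$ simultaneously impinging uniform plane waves is exactly the sum of the fields each wave would produce on its own.

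First, I would fix one incident plane wave with direction $(\theta^i_m, \phi^i_m)$ and magnitude $E^i(\theta^i_m, \phi^i_m)$ and invoke \eqref{E:ScatteringArrayOneIncident} verbatim. This yields, for that single wave, the three components $E^s_r \simeq 0$, $E^s_\theta$ and $E^s_\phi$ of the field scattered by the full array, including the inner sum over unit cells $n$ with the correct per-cell factors $A_n/\lambda$, $e^{j\Omega_n}$, $\text{Sa}(a_n,b_n;\theta^i_m,\phi^i_m;\theta^s,\phi^s)$, the incident-path phase $e^{j 2\pi \mathbf{p}_n^\top \mathbf{u}(\theta^i_m,\phi^i_m)/\lambda}$, and the scattered-path phase $e^{j 2\pi \mathbf{p}_n^\top \mathbf{u}(\theta^s,\phi^s)/\lambda}$.

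Second, I would sum the resulting contributions over $m=1,\ldots,M$ componentwise. Care must be taken about which quantities live inside which sum: the polarization prefactors $\cos\theta^i_m$, $\cos\phi^i_m\sin\phi^s-\sin\phi^i_m\cos\phi^s$ and $\sin\phi^i_m\sin\phi^s+\cos\phi^i_m\cos\phi^s$, the magnitude $E^i(\theta^i_m,\phi^i_m)$, the intrinsic directivity $\text{Sa}(a_n,b_n;\theta^i_m,\phi^i_m;\theta^s,\phi^s)$, and the incident-path phase all depend on $m$ and must remain inside the outer sum, whereas the common prefactor $C\,e^{-j 2\pi r^s/\lambda}/r^s$ and the observation-side ingredients ($\cos\theta^s$, $\sin\phi^s$, $\cos\phi^s$, and the scattered-path phase $e^{j 2\pi \mathbf{p}_n^\top \mathbf{u}(\theta^s,\phi^s)/\lambda}$) are $m$-independent. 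Keeping the outer $m$-sum and the inner $n$-sum nested exactly reproduces \eqref{E:ScatteringArrayMultipleIncident}.

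The only substantive point to flag is not a computational obstacle but an assumption check: linear superposition over both incident waves and unit cells presupposes that mutual coupling between patches and re-scattering are negligible, that the induced current on each patch equals the physical-optics current it would carry as an isolated plate under each plane wave separately, and that the narrowband approximation of \eqref{E:ScatteringDelay} applies uniformly to each $(m,n)$ pair. These are precisely the hypotheses already invoked in Theorem~\ref{T:ScatterdFieldPatch} and in the derivation of \eqref{E:ScatteringArrayOneIncident}, so under the same standing assumptions the result follows immediately by linearity.
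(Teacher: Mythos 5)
Your proposal is correct and matches the paper's own (implicit) argument: the paper gives no separate proof for this proposition, justifying it only by the remark that combining the individual scattered fields via superposition of the single-incidence array result \eqref{E:ScatteringArrayOneIncident} over the $M$ incident waves yields \eqref{E:ScatteringArrayMultipleIncident}, which is exactly what you do. Your explicit bookkeeping of which factors depend on $m$ and your statement of the underlying physical-optics and no-mutual-coupling assumptions are consistent with, and slightly more careful than, the paper's treatment.
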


\begin{rem}
  Proposition~(\ref{T:ScatterdFieldRISMultipleIncident}) presents a pretty generic model for RISs consistent with electromagnetic theory in the far-field regime. The only assumption is that unit cells can be approximated well by rectangular metallic patches. Unlike other popular models, the proposed model takes into account the effect of the incident and scattered angles, polarization features, and the topology and geometry of RISs.
\end{rem}

\section{Uniform linear RISs}\label{S:LinearRISs}
\noindent We note that the physics-based models (\ref{E:E_s_r_theta_phi_Patch}) and (\ref{E:ScatteringArrayMultipleIncident}) are too vague to offer much insight. To shed new light on the input/output behaviors, this section studies uniform linear RISs consisting of patches at $\mathbf{p}_n = [0, (n-1)d, 0 ]^\top$, $n=1, \ldots, N$. Here $d$ is the unit cell spacing.

For a quick demonstration, suppose a linear RIS is illuminated by uniform plane waves lying in the $yoz$ plane, as illustrated in Fig.~\ref{Fig:LinearRIS_1d}. The electric field can be simplified considerably if the scattered waves are observed in the same plane. To reduce the ambiguity of the evaluation angles, $\theta^i, \theta^s \in [-\pi/2, \pi/2]$ are both measured against the $z$-axis in what follows. 

Let $\text{Sa}(b; \theta^s; \theta^i)$ stand for the sampling function $\frac{\sin \bigl( \frac{ \pi b }{\lambda} ( \sin \theta^s + \sin \theta^i ) \bigr) }{ \frac{ \pi b }{\lambda} ( \sin \theta^s + \sin \theta^i ) }$. We obtain from (\ref{E:E_s_r_theta_phi_Patch}) immediately the scattered field $E^s_{\mathbf{o}} ( r^s, \theta^s )$ of the patch at the origin
\begin{equation}\label{E:ScatteringUnitCell_yzPlane}
  \begin{dcases*}
    E^s_r ( r^s, \theta^s ) \simeq 0 ,        \\
    E^s_{\theta} ( r^s, \theta^s ) \simeq 0 , \\
    E^s_{\phi} ( r^s, \theta^s ) \simeq C \frac{ A }{ \lambda } \frac{ e^{-j 2 \pi r^s / \lambda } }{ r^s } E^i ( \theta^i ) \cos \theta^i \text{Sa}(b; \theta^s; \theta^i) .
  \end{dcases*}
\end{equation}
The scattered vector electric field is reduced to a scalar field. The strength is written as
\begin{align*}
  \bigl| E^s_{\mathbf{o}} (r^s, \theta^s) \bigr| 
  = \sqrt{ ( E^s_{r} )^2 + ( E^s_{\theta} )^2 + ( E^s_{\phi} )^2 } 
  \simeq \frac{ | C | }{ r^s } \frac{ A }{ \lambda } E^i ( \theta^i )  \cos \theta^i \bigl| \text{Sa} (b; \theta^s; \theta^i) \bigr|.
\end{align*}

We now turn to phase difference created by the incident and scattered path length difference. We find from Fig.~\ref{Fig:LinearRIS_1d} that, the path length difference between $\mathbf{p}_n$ and the origin is 
\[
  d_{\mathbf{p}_n} = d_{\mathbf{p}_n}^{s} + d_{\mathbf{p}_n}^{i}
\]
and
\begin{equation*}
  \begin{dcases*}
    d_{\mathbf{p}_n}^{s} = - (n-1)d \sin \theta^s , \\
    d_{\mathbf{p}_n}^{i} = - (n-1)d \sin \theta^i .
  \end{dcases*}
\end{equation*}
With the narrowband assumption, the scattered time-varying electric field is then given by
\begin{equation*}
  E^{s}_{\mathbf{p}_n} (r^s, \theta^{s})
  \simeq C \frac{ A_n }{ \lambda } \frac{ e^{-j 2 \pi r^s / \lambda } }{ r^s } E^{i} (\theta^i) \cos \theta^i \text{Sa}(b_n; \theta^s; \theta^i) e^{ j 2 \pi (n-1) d \sin \theta^{i} / \lambda }  e^{ j 2 \pi (n-1) d \sin \theta^{s} / \lambda } .
\end{equation*}
The combination of the scattered fields of all patches yields the overall field observed at $(r^s, \theta^{s})$.

\begin{prop}\label{T:LinearRIS_SISO}
Suppose a linear RIS with spacing $d$ is illuminated by one uniform plane wave whose strength is $E^{i} (\theta^i)$. The scattered electric field observed at $(r^s, \theta^s)$ is given as 
\begin{equation}\label{E:ResultingField}
  E^{s} (r^s, \theta^{s})
  \simeq C \frac{ e^{-j 2 \pi r^s / \lambda } }{ r^s } E^{i} (\theta^i) \cos \theta^i \sum_{n=1}^{N} \frac{ A_n }{ \lambda } e^{j \Omega_n} \text{Sa}(b_n; \theta^s; \theta^i) e^{ j 2 \pi (n-1) d \sin \theta^{i} / \lambda }  e^{ j 2 \pi (n-1) d \sin \theta^{s} / \lambda } .
\end{equation}
\end{prop}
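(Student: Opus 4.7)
The plan is to derive the scattered field by specializing the already-established single-patch scattering result to the linear-array geometry, then summing contributions via superposition. Since everything is done in the $yoz$-plane with $\phi^i = \phi^s = \pi/2$, the first step is to invoke Theorem~\ref{T:ScatterdFieldPatch} for the virtual unit cell at the origin and check that the $r$- and $\theta$-components vanish while the $\phi$-component simplifies. Substituting $\phi^i = \phi^s = \pi/2$ into the polarization factors of (\ref{E:E_s_r_theta_phi_Patch}) gives $\cos\theta^s(\cos\phi^i\sin\phi^s - \sin\phi^i\cos\phi^s) = 0$ for the $\theta$-component, and $\sin\phi^i\sin\phi^s + \cos\phi^i\cos\phi^s = 1$ for the $\phi$-component. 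The two-dimensional sampling function likewise collapses: its $a$-factor becomes $\text{Sa}(\cdot) = 1$ because $\sin\phi^s = \sin\phi^i = 1$ kills the $\cos\phi$ contributions in the argument, leaving only $\text{Sa}(b;\theta^s;\theta^i)$. This recovers (\ref{E:ScatteringUnitCell_yzPlane}) exactly.

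Next I would handle the phase discrepancy across the array. The text just before the proposition already computes $d^{i}_{\mathbf{p}_n} = -(n-1)d\sin\theta^i$ and $d^{s}_{\mathbf{p}_n} = -(n-1)d\sin\theta^s$ by evaluating $-\mathbf{p}_n^\top \mathbf{u}(\theta,\phi)$ at $\mathbf{p}_n = [0,(n-1)d,0]^\top$ and $\phi = \pi/2$. The narrowband approximation (\ref{E:ScatteringDelay}) then attaches a propagation-delay factor $e^{-j 2\pi (d^i_{\mathbf{p}_n} + d^s_{\mathbf{p}_n})/\lambda} = e^{j2\pi(n-1)d\sin\theta^i/\lambda}\, e^{j2\pi(n-1)d\sin\theta^s/\lambda}$ to the origin-cell field. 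Allowing each unit cell to carry its own geometry $(a_n, b_n)$, collecting area $A_n$, and configurable phase shift $e^{j\Omega_n}$ yields the per-cell scattered field already displayed in the paper.

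Finally, the total field follows from the superposition principle: since the induced surface currents from disjoint patches add linearly and the far-field scattered field depends linearly on these currents, the overall scalar $\phi$-component is the sum of the individual cell contributions. Factoring the common terms $C e^{-j2\pi r^s/\lambda}/r^s$, $E^i(\theta^i)$, and $\cos\theta^i$ outside the sum produces exactly (\ref{E:ResultingField}).

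I do not anticipate a serious obstacle; the result is essentially bookkeeping, specializing Proposition~\ref{T:ScatterdFieldRISMultipleIncident} (with $M=1$) to a linear array observed in its plane of incidence. The only point requiring care is verifying that the polarization coefficients and the bivariate sampling function really do collapse cleanly at $\phi^i = \phi^s = \pi/2$, so that the vector scattered field becomes a single-component scalar field. Once that simplification is in hand, the phase-delay computation and the superposition step are immediate.
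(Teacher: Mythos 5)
Your proposal is correct and follows essentially the same route as the paper: specialize Theorem~\ref{T:ScatterdFieldPatch} to the $yoz$-plane (where the polarization factors and the $a$-direction sampling factor collapse, leaving the scalar $\phi$-component in (\ref{E:ScatteringUnitCell_yzPlane})), attach the interelement phase factors $e^{j2\pi(n-1)d\sin\theta^i/\lambda}e^{j2\pi(n-1)d\sin\theta^s/\lambda}$ from the path-length computation and the narrowband approximation (\ref{E:ScatteringDelay}), and sum over the cells by superposition. This is exactly the in-text derivation the paper gives immediately before the proposition.
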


\begin{rem}
Proposition~(\ref{T:LinearRIS_SISO}) provides a natural and intrinsic characterization of the single-input single-output (SISO) behaviors of linear RISs. It determines, if there is only one incident wave, what is observed at $(r^s, \theta^s)$. How much will the incident wave be attenuated (or amplified) at $(r^s, \theta^s)$?
\end{rem}

To gain further understanding of the SISO behavior of RISs, we define the steering function\footnotemark{} of linear RISs as
\begin{equation}\label{E:SteeringFunctionLinearRIS}
  \begin{aligned}
    T(\theta^s; \theta^i) 
    = & \lim_{r^s \to \infty} \frac{r^s E^{s} (r^s, \theta^{s})}{ e^{-j 2 \pi r^s / \lambda} E^{i} (\theta^i) \cos \theta^i } \\
    \simeq & C \sum_{n=1}^{N} \frac{ A_n }{ \lambda } e^{j \Omega_n} \text{Sa}(b_n; \theta^s; \theta^i) e^{ j 2 \pi (n-1) d \sin \theta^{i} / \lambda }  e^{ j 2 \pi (n-1) d \sin \theta^{s} / \lambda } .
  \end{aligned}
\end{equation}
The bistatic RCS is then given by
\begin{equation}\label{E:RCS_LinearRIS}
  \sigma (\theta^s; \theta^i) = 4 \pi \cos^2 \theta^i \bigl| T(\theta^s; \theta^i) \bigr|^2.
\end{equation}

\footnotetext{The steering function is used to describe some inherent properties of RISs. Since the term $\cos \theta^i$ in (\ref{E:ResultingField}) is related to the particular polarization features of the EM waves impinging on RISs, it is not included in the steering function.}

When there exist multiple incident waves, the superposition of individual fields at $(r^s, \theta^{s})$ leads to the overall scattered field. We thus have a proposition to characterize the multiple-input single-output (MISO) behaviors of RISs.

\begin{prop}
Suppose a linear RIS is illuminated by multiple uniform plane waves. The strengths of the incident EM waves are $E^i (\theta^i_m), m=1, \ldots, M$. The scattered electric field observed at $(r^s, \theta^s)$ is given as 
\begin{equation}\label{E:ScatteringLinearRIS}
    E^{s} (r^s, \theta^{s}) 
    \simeq C \frac{ e^{-j 2 \pi r^s / \lambda }}{ r^s }
    \sum_{m=1}^{M} E^{i}(\theta_m^i) \cos \theta_m^i \sum_{n=1}^{N} \frac{ A_n }{ \lambda } e^{j \Omega_n} \text{Sa}(b_n; \theta^s; \theta^i_m) e^{ j 2 \pi (n-1) d \sin \theta_{m}^{i} / \lambda } e^{ j 2 \pi (n-1) d \sin \theta^{s} / \lambda } .
\end{equation}
\end{prop}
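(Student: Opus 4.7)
The proof is a direct application of the superposition principle, which holds in this setting because both Maxwell's equations in free space and the physical optics approximation used throughout the paper are linear in the incident field. My plan is to reuse Proposition~\ref{T:LinearRIS_SISO} term by term and then collapse the $M$ resulting expressions into a single outer sum.

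First, I would decompose the total incident illumination as a superposition of $M$ uniform plane waves with strengths $E^i(\theta_m^i)$ arriving from directions $\theta_m^i$, $m=1,\ldots,M$. Since the induced surface current on the $n$-th patch is a linear functional of the total tangential incident magnetic field (in the physical optics regime used in Appendix~\ref{S:Appendix_A} to derive~(\ref{E:E_s_r_theta_phi_Patch})), the current splits as a sum of per-wave contributions. Likewise, the configuration phase $e^{j\Omega_n}$ and the collecting-area factor $A_n/\lambda$ are intrinsic, direction-independent properties of cell $n$ that act multiplicatively on the total current, so they commute with the superposition.

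Next, for each fixed $m$ I would invoke Proposition~\ref{T:LinearRIS_SISO} to obtain the scattered field at $(r^s,\theta^s)$ due to that single plane wave in the form of~(\ref{E:ResultingField}). Summing these expressions over $m$, the observation-side phase $e^{j 2\pi(n-1)d\sin\theta^s/\lambda}$ and the propagation factor $e^{-j 2\pi r^s/\lambda}/r^s$ are manifestly $m$-independent and factor out, whereas the incident-side steering phases $e^{j 2\pi(n-1)d\sin\theta_m^i/\lambda}$, the directivity factor $\text{Sa}(b_n;\theta^s;\theta_m^i)$, and the polarization factor $\cos\theta_m^i$ all carry the index $m$ and must stay inside the double sum. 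Rearranging the nested summation into the order $\sum_m \sum_n$ yields exactly~(\ref{E:ScatteringLinearRIS}).

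I do not anticipate a substantive obstacle, as the per-wave formula already does all of the electromagnetic work; the only content to add is that the $m$-dependent quantities are precisely those that end up inside the outer sum. The one subtlety worth explicitly verifying is that the narrowband assumption, used to convert time delays $d_{\mathbf{p}_n}/c$ into phase factors $e^{-j 2\pi d_{\mathbf{p}_n}/\lambda}$, applies uniformly across all $M$ incident waves. This is automatic because the $M$ waves are assumed to share a common wavelength $\lambda$, so superposition at the field level coincides with superposition at the phasor level and no cross-frequency effects arise.
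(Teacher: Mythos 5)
Your proposal is correct and takes essentially the same route as the paper, which justifies (\ref{E:ScatteringLinearRIS}) in one line by superposing the single-wave field of Proposition~\ref{T:LinearRIS_SISO} over the $M$ incident waves at the observation point. Your additional remarks on the linearity of the physical-optics current and the common wavelength across waves merely make explicit what the paper leaves implicit.
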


\begin{figure}[!htbp]
  \centering
  \includegraphics[width=0.75\columnwidth]{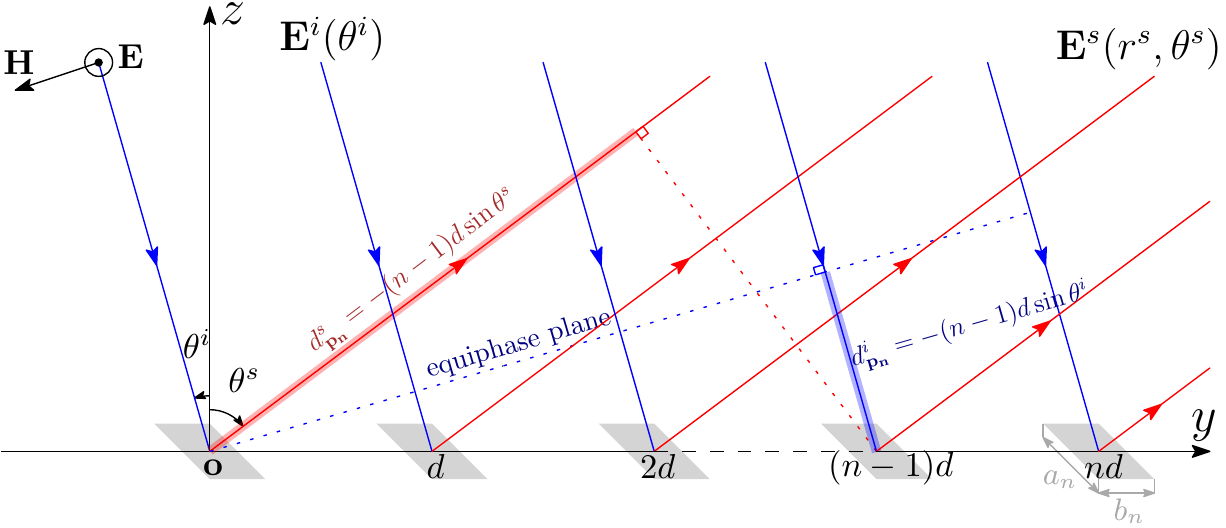}
  \caption{Phase discrepancy due to the interelement path length difference for linear RISs. The RIS is illuminated by uniform plane waves lying in the $yoz$ plane. The incident EM wave (in blue) is originating from $\theta^i$. The scattered electric field is observed at $(r^s, \theta^s)$ (in red). The incident and scattered path length differences between the origin and the unit cell at $(n-1)d$ are $d_{\mathbf{p}_n}^{i} = - (n-1)d \sin \theta^i $ and $d_{\mathbf{p}_n}^{s} = - (n-1)d \sin \theta^s$, respectively.}
  \label{Fig:LinearRIS_1d}
\end{figure}

\subsection{A canonical representation for the MIMO behaviors of RISs}
\noindent Several equations, e.g., (\ref{E:RCS_GeneralRIS}), (\ref{E:SteeringFunctionLinearRIS}), (\ref{E:RCS_LinearRIS}), and (\ref{E:ScatteringLinearRIS}), are proposed to characterize RISs' output observed at a single point. We now address how to describe the MIMO behaviors of RISs.

It is worth noting that the non-reducible term $\text{Sa} (\cdot)$, a product of two sampling functions, leads to fundamental challenges (non-linearity) to simplifying the scattering model (\ref{E:ScatteringLinearRIS}) further. Nevertheless, when $b_n \ll \lambda$, the sampling function is constant approximately, i.e., $ \text{Sa}(b_n; \theta^s; \theta^i) \simeq 1 $. The MISO representation (\ref{E:ScatteringLinearRIS}) could be expressed in matrix form
\begin{equation}\label{E:E_s_r_theta_phi_Array}
  \begin{aligned}
    E^{s} (r^s, \theta^{s})
     \simeq & \frac{ C }{ \lambda } \frac{ e^{-j 2 \pi r^s / \lambda }}{ r^s } \sum_{n=1}^{N} A_n e^{j \Omega_n} e^{ j 2 \pi (n-1) d \sin \theta^{s} / \lambda } \sum_{m=1}^{M} e^{ j 2 \pi (n-1) d \sin \theta_{m}^{i} / \lambda } \cos \theta_{m}^iE^{i} (\theta_{m}^i) \\
     = & \frac{ C }{ \lambda } \frac{ e^{-j 2 \pi r^s / \lambda }}{ r^s }
    \begin{bmatrix}
      1 & \cdots & e^{ j 2 \pi (N-1) d \sin \theta^s / \lambda }
    \end{bmatrix}
    \begin{bmatrix}
      A_1 e^{j \Omega_1} &        & 0    \\
                         & \ddots &                    \\
      0                  &        & A_N e^{j \Omega_N}
    \end{bmatrix} \\
     &
    \begin{bmatrix}
      1                                                   & \cdots & 1       \\
      \vdots                                              & \ddots & \vdots  \\
      e^{ j 2 \pi (N-1) d \sin \theta_{1}^{i} / \lambda } & \cdots & e^{ j 2 \pi (N-1) d \sin \theta_{M}^{i} / \lambda }
    \end{bmatrix}
    \begin{bmatrix}
      \cos \theta_{1}^{i} E^i (\theta_{1}^{i}) \\
      \vdots                                   \\
      \cos \theta_{M}^{i} E^i (\theta_{M}^{i})
    \end{bmatrix} .
  \end{aligned}
\end{equation}

We are now in a position to show the linear representation to characterize the MIMO behaviors of RISs. Given observation points $\{ (r_1^s, \theta_{1}^{s}), \cdots, (r_T^s, \theta_{T}^{s}) \}$, as illustrated in Fig.~\ref{Fig:ScatteringLinearRIS_MIMO}, we stack all of the observed electric fields $\{ E^{s} (r_1^s, \theta_{1}^{s}), \ldots, E^{s} (r_T^s, \theta_{T}^{s}) \}$ to form a column vector denoted by $\mathbf{E}^s (\mathbf{r}^s, \mathbf{\Theta}^s)$. With a slight abuse of notations, the incident fields $\{E^i (\theta_{1}^{i}), \ldots, E^i (\theta_{M}^{i}) \}$ are stacked into a column vector $\mathbf{E}^i (\mathbf{\Theta}^i)$. We then have a canonical representation to describe the MIMO behaviors
\begin{equation}\label{E:E_s_r_theta_phi_ArrayDense}
  \begin{aligned}
    \begin{bmatrix}
      E^{s} (r_1^s, \theta_{1}^{s}) \\
      \vdots                           \\
      E^{s} (r_T^s, \theta_{T}^{s})
    \end{bmatrix}
    = & \frac{ C }{ \lambda }
    \begin{bmatrix}
      \frac{ e^{-j 2 \pi r_1^s / \lambda }}{ r_1^s } &        & 0 \\
                                                     & \ddots & \\
       0                                             &        & \frac{ e^{-j 2 \pi r_T^s / \lambda }}{ r_T^s }
    \end{bmatrix}
    \begin{bmatrix}
      1      & \cdots & e^{ j 2 \pi (N-1) d \sin \theta_1^s / \lambda } \\
      \vdots & \ddots & \vdots                                          \\
      1      & \cdots & e^{ j 2 \pi (N-1) d \sin \theta_T^s / \lambda } \\
    \end{bmatrix} 
    \begin{bmatrix}
          A_1 e^{j \Omega_1} &        &  0 \\
                             & \ddots &    \\
          0                  &        & A_N e^{j \Omega_N}
    \end{bmatrix} \\
    & \begin{bmatrix}
      1                                                   & \cdots & 1                                                   \\
      \vdots                                              & \ddots & \vdots                                              \\
      e^{ j 2 \pi (N-1) d \sin \theta_{1}^{i} / \lambda } & \cdots & e^{ j 2 \pi (N-1) d \sin \theta_{M}^{i} / \lambda }
    \end{bmatrix} 
    \begin{bmatrix}
      \cos \theta_{1}^{i} &        &  0 \\
                          & \ddots &    \\
      0                   &        & \cos \theta_{M}^{i}
    \end{bmatrix}
    \begin{bmatrix}
      E^i (\theta_{1}^{i}) \\
      \vdots    \\
      E^i (\theta_{M}^{i})
    \end{bmatrix} .
  \end{aligned}
\end{equation}

\begin{figure}[!htbp]
  \centering
  \includegraphics[width=0.7\columnwidth]{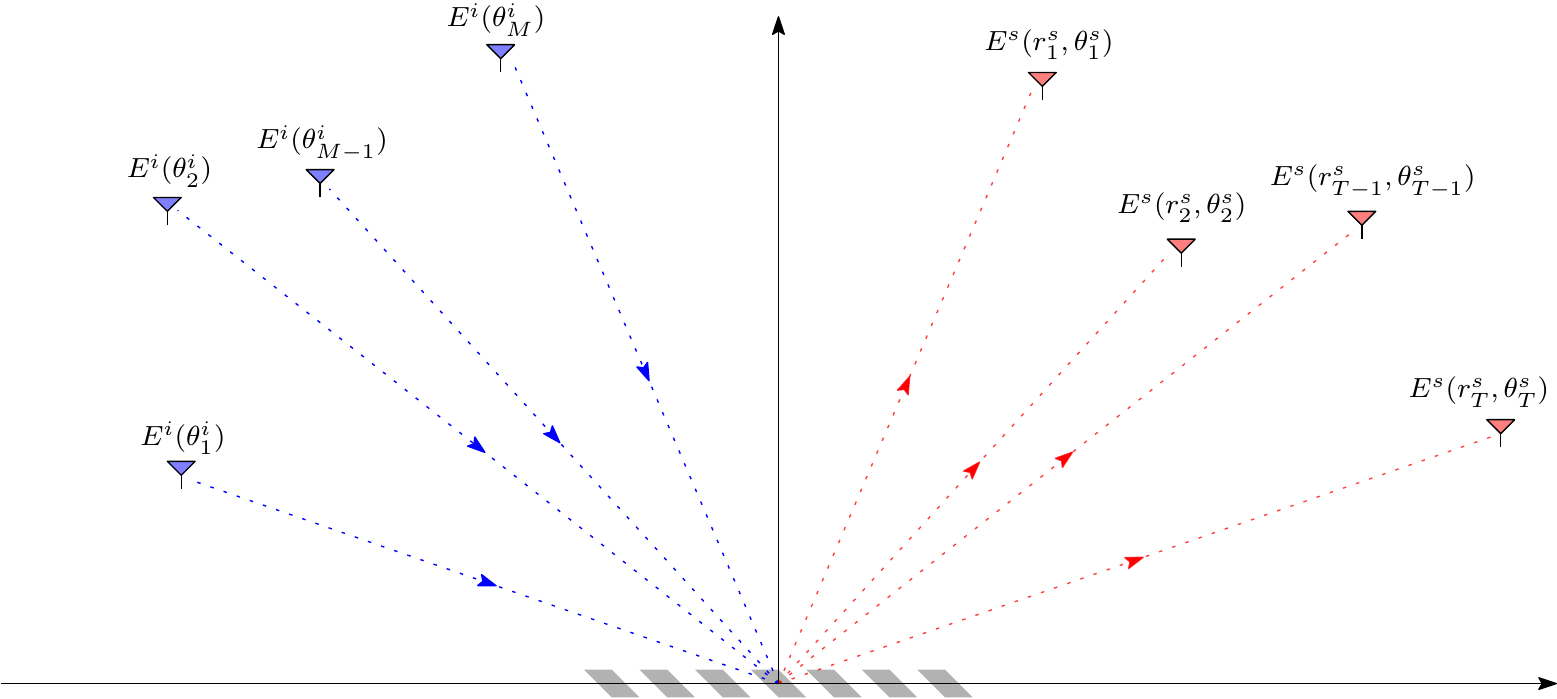}
  \caption{Multiple EM waves originating from $\theta_{1}^{i}, \ldots, \theta_{M}^{i}$ illuminate the RIS simultaneously. The scattered field are observed at mutiple points $(r_1^s, \theta_{1}^{s}), \ldots, (r_T^s, \theta_{T}^{s})$. A simple system of linear equations (\ref{E:E_s_r_theta_phi_ArrayDense}) can describe the MIMO behaviors under reasonable assumptions.}
  \label{Fig:ScatteringLinearRIS_MIMO}
\end{figure}

To shorten notations, we let
\begin{equation*}
  \begin{aligned}
    \mathbf{L} (\mathbf{r}^s) 
     & = \text{diag} \Bigl( \frac{ e^{-j 2 \pi r_1^s / \lambda }}{ r_1^s }, \frac{ e^{-j 2 \pi r_2^s / \lambda }}{ r_2^s }, \cdots, \frac{ e^{-j 2 \pi r_T^s / \lambda }}{ r_T^s } \Bigr), \\
    \textbf{cos} ( \mathbf{\Theta}^i ) 
     & = \text{diag} \bigl( \cos \theta_{1}^{i}, \cos \theta_{2}^{i}, \cdots, \cos \theta_{M}^{i} \bigr), \\
    \mathbf{W}
     & = \text{diag} \bigl( A_1 e^{j \Omega_1}, A_2 e^{j \Omega_2}, \cdots, A_N e^{j \Omega_N} \bigr).
  \end{aligned}
\end{equation*}
Given knots $x_1, \ldots, x_m$, denote by $ {\text{Vander}}_{m \times n} \bigl( x_1, \ldots, x_m \bigr)$ the Vandermonde matrix
\[
  \begin{bmatrix}
    1      & x_1    & \cdots & x_1^{n-1} \\
    \vdots & \vdots & \ddots & \vdots    \\
    1      & x_m    & \cdots & x_m^{n-1}
  \end{bmatrix} .
\]
Let 
\begin{equation*}
  \begin{aligned}
    \mathbf{V} (\mathbf{\Theta}^s) &
    = \text{Vander}_{T \times N} \bigl( e^{ j 2 \pi d \sin \theta_{1}^{s} / \lambda }, \cdots, e^{ j 2 \pi d \sin \theta_{T}^{s} / \lambda } \bigr), \\
    \mathbf{V} (\mathbf{\Theta}^i) &
    = \text{Vander}_{M \times N} \bigl( e^{ j 2 \pi d \sin \theta_{1}^{i} / \lambda }, \cdots, e^{ j 2 \pi d \sin \theta_{M}^{i} / \lambda } \bigr) .
  \end{aligned}
\end{equation*}

\begin{prop}
Suppose a linear RIS is illuminated by multiple uniform plane waves, whose strengths are $E^i (\theta^i_m), m=1, \ldots, M$. When $b_n \ll \lambda$, a canonical representation for the MIMO behaviors is given as
\begin{equation}\label{E:InputOutputEquation}
  \mathbf{E}^s (\mathbf{r}^s, \mathbf{\Theta}^s) = \frac{C}{\lambda} \mathbf{L} ( \mathbf{r}^s ) \mathbf{V}(\mathbf{\Theta}^s) \mathbf{W} {\mathbf{V} (\mathbf{\Theta}^i)}^\top \textbf{cos} ( \mathbf{\Theta}^i ) \mathbf{E}^i (\mathbf{\Theta}^i) .
\end{equation}
\end{prop}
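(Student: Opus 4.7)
The plan is to start from the scalar MISO expression (\ref{E:ScatteringLinearRIS}), apply the small-cell approximation to eliminate the $\text{Sa}$ factor, recognize the resulting double sum as a matrix triple product, and finally stack the observation equations across all $T$ receive points. Since Proposition~4.2 already gives the exact scattered field for a linear RIS under $M$ incident plane waves, all the physics is inherited; what remains is purely algebraic manipulation.

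First I would invoke the hypothesis $b_n \ll \lambda$, which by the remark following Theorem~\ref{T:ScatterdFieldPatch} makes $\text{Sa}(b_n;\theta^s;\theta_m^i) \simeq 1$ uniformly in $m$ and $n$. Substituting into (\ref{E:ScatteringLinearRIS}) removes the only obstruction to separating the $n$-dependence from the $m$-dependence in the double sum. The exponential kernel $e^{j 2\pi(n-1)d\sin\theta^s/\lambda}\,e^{j 2\pi(n-1)d\sin\theta_m^i/\lambda}$ splits into a product, so the sum factors cleanly: the inner sum over $m$ contracts the incident fields against a Vandermonde row in $e^{j2\pi(n-1)d\sin\theta_m^i/\lambda}$, and the outer sum over $n$ then appears as a Vandermonde row in $e^{j2\pi(n-1)d\sin\theta^s/\lambda}$ applied to a diagonal matrix of configurations $A_n e^{j\Omega_n}$. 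This is precisely the single-row identity already displayed in (\ref{E:E_s_r_theta_phi_Array}), so at this stage I would simply cite that derivation.

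The second step is to promote the single observation point $(r^s,\theta^s)$ to the vector of points $\{(r_t^s,\theta_t^s)\}_{t=1}^T$. Writing (\ref{E:E_s_r_theta_phi_Array}) for each $t$ and stacking vertically, the scalar prefactor $e^{-j2\pi r_t^s/\lambda}/r_t^s$ becomes the diagonal matrix $\mathbf{L}(\mathbf{r}^s)$ acting on the left, and the row vector $[1,\ldots,e^{j2\pi(N-1)d\sin\theta^s/\lambda}]$ stacks into the matrix $\mathbf{V}(\mathbf{\Theta}^s)$. The middle diagonal $\mathbf{W} = \operatorname{diag}(A_n e^{j\Omega_n})$ and the right-hand matrices $\mathbf{V}(\mathbf{\Theta}^i)^\top$ and $\textbf{cos}(\mathbf{\Theta}^i)$ are independent of $t$ and so pull out of the stacking operation unchanged. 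Identifying each of these blocks with the named Vandermonde and diagonal matrices introduced just before the proposition yields exactly (\ref{E:InputOutputEquation}).

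There is no genuinely hard step; the only thing that requires care is matching dimensions and transposes so that the $m$-index is contracted by $\mathbf{V}(\mathbf{\Theta}^i)^\top$ on the right, while the $t$-index is indexed by $\mathbf{V}(\mathbf{\Theta}^s)$ on the left. I would verify this by tracing a single entry: the $(t,m)$ element of $\mathbf{V}(\mathbf{\Theta}^s)\mathbf{W}\mathbf{V}(\mathbf{\Theta}^i)^\top$ must equal $\sum_{n=1}^N A_n e^{j\Omega_n} e^{j2\pi(n-1)d\sin\theta_t^s/\lambda} e^{j2\pi(n-1)d\sin\theta_m^i/\lambda}$, which is directly the kernel appearing in (\ref{E:ScatteringLinearRIS}) after the $\text{Sa}\simeq 1$ reduction. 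Once this one entrywise check is in place, equation (\ref{E:InputOutputEquation}) follows, completing the proof.
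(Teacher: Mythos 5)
Your proposal is correct and follows essentially the same route as the paper: the text preceding the proposition applies the $\text{Sa}(b_n;\theta^s;\theta^i_m)\simeq 1$ approximation to (\ref{E:ScatteringLinearRIS}), factors the double sum into the matrix triple product of (\ref{E:E_s_r_theta_phi_Array}), and then stacks over the $T$ observation points to obtain (\ref{E:E_s_r_theta_phi_ArrayDense}), which is (\ref{E:InputOutputEquation}) in the shorthand notation. Your entrywise check of $\mathbf{V}(\mathbf{\Theta}^s)\mathbf{W}\mathbf{V}(\mathbf{\Theta}^i)^\top$ is a sensible sanity verification but adds nothing beyond the paper's own derivation.
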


\begin{rem}
  The proposed MIMO representation (\ref{E:InputOutputEquation}) is physically accurate and mathematically tractable compared with other popular models. The advantage is that a simple system of linear equations is employed to describe the MIMO behaviors under reasonable assumptions. It can serve as a fundamental model for analyzing and optimizing the performance of RIS-aided systems in the far-field regime.
\end{rem}

\begin{rem}
  The two Vandermonde matrices $\mathbf{V} (\mathbf{\Theta}^s)$ and $\mathbf{V} (\mathbf{\Theta}^i)$ are the steering matrices associated with the incident and scattered directions, respectively. They are well-conditioned if the knots are more or less equally spaced on unit circle \cite{pan2016bad}. The matrix $\mathbf{L} (\mathbf{r}^s)$ serves as the large-scale distance attenuation and phase delay factors. The diagonal matrix $\textbf{cos} ( \mathbf{\Theta}^i )$ is related to the particular polarization features of the EM waves impinging on RISs.
  \end{rem}

\begin{rem}
  The weight matrix $\mathbf{W}$ consists of two individual components (collecting area and phase shifting). It is worth noting that the collecting area provides us with a new degree of design freedom. The simultaneous configuration of collecting area
  and phase shifting is essential for the capability of complicated beamforming. It also converts many constrained optimization problems in the union of tori (highly non-convex) to linear space (a regular convex set) as a by-product.
\end{rem}

\section{Advantages and limitations under three typical configurations}\label{S:PerformanceAnalysis}
\noindent This section studies the advantages and limitations of RISs under three typical configurations. We first consider the random phase configuration, which avoids the overhead caused by direction-of-angle (DOA) and strength estimations. Unlike the other two configuration schemes, random phase shifting is a well-accepted benchmark  to quantify the performance enhancement \cite{liu2021reconfigurable, ding2020impact, abu2021near}.

The continuous phase compensation configuration, which is by far the most popular approach to steering the impinging waves, gives rise to the generalized Snell's law. The main disadvantage of this configuration is the anomalous mirror effect, especially when multiple incident EM waves exist.

In some applications, e.g., high-resolution sensing, RISs are required to manipulate incident EM waves to generate specific radiation patterns. We finally show the capability of complicated beamforming functions, e.g., beam focusing, endowed by simultaneous configurations of collecting area and phase shifting. Such configuration has the power to redistribute the incident waves to (almost) arbitrary radiation patterns under reasonable constraints.

It is worth pointing out that an essential difference exists between the proposed simultaneous configuration of collecting area and phase shifting and other dynamic amplitude and phase controlling schemes \cite{hu2018beyond1, puglielli2015design, long2021active, zhao2021exploiting, zhang2022holographic}. To achieve the capability of dynamic amplitude controlling with a wide range, active components, e.g., active-load impedances, are popular choices. We attempt to change amplitude by designing the geometry of unit cells \cite{liu2014broadband}.

Before proceeding, we summarize in Table~\ref{T:Summary} the capability and overhead of the proposed three typical configurations.

\begin{table}[!htbp]
\centering
\caption{Performance and overhead of the propsed three typical configurations}
  \begin{tabular}{|p{0.24\columnwidth}|p{0.30\columnwidth}|p{0.22\columnwidth}|p{0.12\columnwidth}|}
  \hline
  Configuration & Functionality \& Capability  & Estimation overhead & Beafroming complexity  \\ \hline
  Random discrete phase shifts  &  Statistical isotropic scattering patterns & \textendash  & Low\\ \hline
  Continuous phase compensation  &  Anomalous reflection \& Generalized Snell's law  & Directions of arrival and departure & Middle \\ \hline
  Simultaneous configuration of collecting area and phase shifting   &  Arbitrary beam reshaping functionality, e.g., beam splitting, beam focusing. & Full information of incident waves, e.g., directions and amplitudes,   & High \\ \hline
  \end{tabular}\label{T:Summary}
\end{table}

\subsection{Scattering via random discrete phase shifting}
\noindent The collecting area is fixed for the first configuration, but the phase shifting is discrete and randomly chosen.  The advantage lies in that RISs with random phase shifting will exhibit isotropic gain statistically. To this end, we assume $\{ \Omega_n \}$ are i.i.d. and $\text{Pr}(\Omega_n = \pi) = \text{Pr}(\Omega_n = -\pi) = 1/2$. We first analyze the \emph{statistical} SISO behaviors.

Since each unit cell re-scatters the incident EM waves independently, we see
\begin{align*}
\mathbb{E} \bigl[ E^{s} (r^s, \theta^{s}) \bigr] 
= & C \frac{ e^{-j 2 \pi r^s / \lambda } }{ r^s } E^{i} (\theta^i) \cos \theta^i \sum_{n=1}^{N} \frac{ A_n }{ \lambda } \mathbb{E} \bigl[ e^{j \Omega_n} \bigr] \text{Sa}(b_n; \theta^s; \theta^i) e^{ j 2 \pi (n-1) d \sin \theta^{i} / \lambda }  e^{ j 2 \pi (n-1) d \sin \theta^{s} / \lambda } \\
= & 0 .
\end{align*}
By a similar argument, we find
\begin{equation}\label{E:AveragedPower}
\mathbb{E} \Bigl[ \bigl| E^s (r^s, \theta^s) \bigr|^2 \Bigr] = |C|^2 \bigl( \frac{1}{r^s} \bigr)^2 \bigl( E^{i}(\theta^i) \cos \theta^i \bigr)^2 \sum_{n=1}^{N} \Bigl( \frac{A_n}{\lambda} \Bigr)^2 \text{Sa}(b_n; \theta^s; \theta^i)^2 .
\end{equation}
When $A_n = A$,
\begin{equation}\label{E:PowerRandomPhaseShifts}
\mathbb{E} \Bigl[ \bigl| E^s (r^s, \theta^s) \bigr|^2 \Bigr] = N |C|^2 \bigl( \frac{1}{ r^s } \bigr)^2 \bigl( E^{i}(\theta^i) \cos \theta^i \bigr)^2 \Bigl( \frac{A}{\lambda} \Bigr)^2 \text{Sa}(b; \theta^s; \theta^i)^2 .
\end{equation}
Moreover, when $b_n \ll \lambda$, $ \text{Sa}(b_n; \theta^s; \theta^i) \simeq 1 $. Then
\begin{equation}\label{E:PowerRandomPhaseShifts_2}
  \mathbb{E} \Bigl[ \bigl| E^s (r^s, \theta^s) \bigr|^2 \Bigr] = N |C|^2 \bigl( \frac{1}{ r^s } \bigr)^2 \bigl( E^{i}(\theta^i) \cos \theta^i \bigr)^2 \Bigl( \frac{A}{\lambda} \Bigr)^2 .
\end{equation}

For the steering and RCS functions, we see
\[
  \mathbb{E} \Bigl[ \bigl| T(\theta^s; \theta^i) \bigr|^2 \Bigr] = |C|^2 \sum_{n=1}^{N} \Bigl( \frac{A_n}{\lambda} \Bigr)^2 \text{Sa}(b_n; \theta^s; \theta^i)^2
\]
and
\begin{equation}\label{E:RCSRandomPhaseShifts}
\begin{aligned}
  \mathbb{E} \bigl[ \sigma (\theta^s; \theta^i) \bigr] 
  = 4 \pi \cos^2 \theta^i \mathbb{E} \Bigl[ \bigl| T(\theta^s; \theta^i) \bigr|^2 \Bigr] 
  = 4 \pi |C|^2 \cos^2 \theta^i \sum_{n=1}^{N} \Bigl( \frac{A_n}{\lambda} \Bigr)^2 \text{Sa}(b_n; \theta^s; \theta^i)^2.
\end{aligned}
\end{equation}
When $A_n = A$ and $b_n \ll \lambda$,
\begin{equation}\label{E:RCSRandomPhaseShifts2}
  \mathbb{E} \bigl[ \sigma (\theta^s; \theta^i) \bigr] = 4 \pi N |C|^2 \cos^2 \theta^i \Bigl( \frac{A}{\lambda} \Bigr)^2 .
\end{equation}

\begin{figure}[!htbp]
  \centering
  \includegraphics[width=0.5\columnwidth]{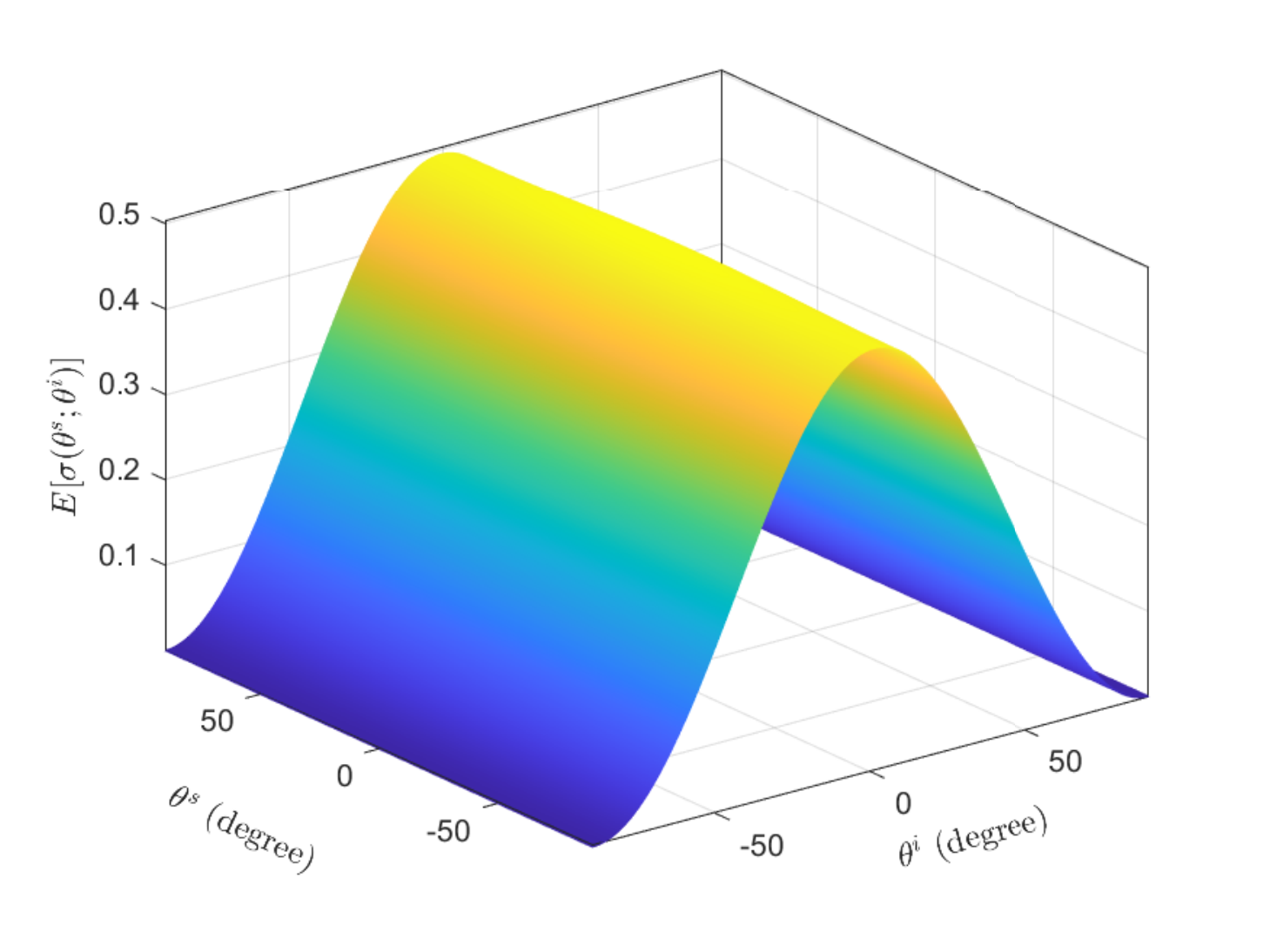}
  \caption{Statistical bistatic RCS of RISs under random phase configuration. The RIS consists of $100$ unit cells with size $0.1\lambda \times 0.1\lambda$ and spacing $d = 0.5 \lambda$. For any given incident angle $\theta^i$, RISs scatter the impinging power field equally in all directions, resulting in a (statistically) constant scattering/radiation pattern.}
  \label{Fig:RCS_RandomPhaseShifts}
\end{figure}

We now turn to the statistical MISO behaviors. For simplicity, let $\hat{\mathbf{E}}^i$ stand for ${\mathbf{V}(\mathbf{\Theta})}^\top \textbf{cos} ( \mathbf{\Theta} ) \mathbf{E}^i (\mathbf{\Theta})$. Then (\ref{E:E_s_r_theta_phi_Array}) becomes
\begin{align*}
  E^s (r^s, \theta^s) 
  = & \frac{ C }{ \lambda } \frac{ e^{-j 2 \pi r^s / \lambda }}{ r^s }
  [ 1, \cdots, e^{ j 2 \pi (N-1) d \sin \theta^s / \lambda } ]
  \mathbf{W} {\mathbf{V}(\mathbf{\Theta})}^\top \textbf{cos} ( \mathbf{\Theta} ) \mathbf{E}^i (\mathbf{\Theta}) \\
  = & \frac{ C }{ \lambda } \frac{ e^{-j 2 \pi r^s / \lambda }}{ r^s }
  [ 1, \cdots, e^{ j 2 \pi (N-1) d \sin \theta^s / \lambda } ]
  \mathbf{W} \hat{\mathbf{E}}^i .
\end{align*}
Since $\mathbb{E} \bigl[ \mathbf{W} \bigr] = \mathbf{0}$, $\mathbb{E} \bigl[ E^{s} (r^s, \theta^{s}) \bigr] = 0$. Similarly, we see
\begin{equation}\label{E:AveragedPower2}
  \begin{aligned}
    \mathbb{E} \Bigl[ \bigl| E^s (r^s, \theta^s) \bigr|^2 \Bigr] 
    = \Bigl( \frac{ | C | }{ \lambda r^s } \Bigr)^2 \mathbb{E} \Bigl[ \bigl| [ 1, \cdots , e^{ j 2 \pi (N-1) d \sin \theta^s / \lambda } ] \mathbf{W} \hat{\mathbf{E}}^i \bigr|^2 \Bigr] 
    = | C |^2 \Bigl( \frac{ 1 }{ r^s } \Bigr)^2 \sum_{n=1}^{N} \Bigl( \frac{A_n}{\lambda} \Bigr)^2  \Bigl| \hat{E}^i_n \Bigr|^2 .
  \end{aligned}
\end{equation}
When $A_n = A$,
\begin{equation}\label{E:AveragedPower3}
  \mathbb{E} \Bigl[ \bigl| E^s (r^s, \theta^s) \bigr|^2 \Bigr] = | C |^2 \Bigl( \frac{ 1 }{ r^s } \Bigr)^2 \Bigl( \frac{A}{\lambda} \Bigr)^2 \bigl\Vert \hat{\mathbf{E}}^i \bigr\Vert^2.
\end{equation}

We conclude through (\ref{E:AveragedPower})-(\ref{E:RCSRandomPhaseShifts2}) that, under the random phase configuration, RISs scatter the incident power field equally in all directions giving a (statistically) constant scattering/radiation pattern, as illustrated in Fig.~\ref{Fig:RCS_RandomPhaseShifts}. We conclude through (\ref{E:AveragedPower2}) and (\ref{E:AveragedPower3}) that the RIS has a statistical isotropic scattering pattern even when multiple incident EM waves exist.

We thus refer to RISs with random phase configuration as statistical isotropic RISs. Although the random phase configuration cannot effectively use the available degrees of freedom, it is a well-accepted benchmark for quantifying the other two configurations.

\subsection{Anomalous reflection via phase compensations}
\noindent Phase compensation configuration is the most popular approach approach to steer an incident EM wave to a preferred reflective direction. We now address the advantages and limitations. We say linear RISs work in phase compensation configuration mode if the phase shifting of each unit cell can be written as $\Omega_n = - 2 \pi (n-1) d \Delta / \lambda + 2 k \pi$, $n = 1, \ldots, N$.

Consider SISO behaviors first. Suppose a linear RIS is illuminated by one uniform plane wave originating from $\theta^{i}$. Then, with (\ref{E:E_s_r_theta_phi_Array}), the scattered field at $(r^s, \theta^{s})$ is given as
\begin{equation*}
  \begin{aligned}
    E^{s} (r^s, \theta^{s})
    = & C \frac{ e^{-j 2 \pi r^s / \lambda }}{ r^s } E^{i}(\theta^i) \cos \theta^i \sum_{n=1}^{N} \frac{ A_n }{ \lambda } e^{j \Omega_n} e^{ j 2 \pi (n-1) d (\sin \theta^{i} + \sin \theta^{s}) / \lambda } \\
    = & C \frac{ e^{-j 2 \pi r^s / \lambda }}{ r^s } E^{i}(\theta^i) \cos \theta^i \sum_{n=1}^{N} \frac{ A_n }{ \lambda } e^{ j 2 \pi (n-1) d (\sin \theta^{i} + \sin \theta^{s} - \Delta) / \lambda } .
  \end{aligned}
\end{equation*}
It is easily seen that
\begin{equation}\label{E:ScatteringAbs}
  \bigl| E^{s} (r^s, \theta^{s}) \bigr| \le \frac{|C|}{ r^s } \bigl( E^{i}(\theta^i) \cos \theta^i \bigr) \sum_{n=1}^{N} \frac{ A_n }{ \lambda }.
\end{equation}

For any given $\theta^i$, if there exists a $\theta^s$ such that $\Delta = \sin \theta^{i} + \sin \theta^{s}$, equality holds in (\ref{E:ScatteringAbs}). The scattered field achieves its global maximum at $\theta^s$. That means the RIS is configured to reflect the incident EM wave originating from $\theta^i$ to $\theta^s$. If all unit cells have the same collecting area $A$, the received power at $(r^s, \theta^s)$ is given as
\begin{equation}\label{E:PowerPhaseCompensation}
  \bigl| E^{s} (r^s, \theta^{s}) \bigr|^2 = N^2 | C |^2 \bigl( \frac{ 1 }{ r^s } \bigr)^2 \bigl( E^{i}(\theta^i) \cos \theta^i \bigr)^2 \Bigl( \frac{ A }{ \lambda } \Bigr)^2.
\end{equation}

\begin{rem}
  The proposed equation (\ref{E:PowerPhaseCompensation}) demonstrates several desired properties with phase compensation configurations. The most important one is the inter-unit cooperative beamforming gain. This highlights the fact that the maximum power reflected by RISs could grow quadratically with the total collecting area of all unit cells (not the physical aperture of RIS). Therefore, large RISs could be leveraged to compensate for the significant power loss over the two paths via beamforming. Comparison of (\ref{E:PowerPhaseCompensation}) and (\ref{E:PowerRandomPhaseShifts}) shows that the power gain brought by phase compensations becomes $N$ times greater than random phase shifting.
\end{rem}

\subsubsection{Grating lobes}
\noindent It is well known that, when $d / \lambda > 1/2$, grating lobes will occur in the visible region for some incident angles. In fact, a necessary and sufficient condition for a grating lobe at $\widetilde{\theta^s} \neq \theta^s$ is that there exists two integers $k \neq 0$ and $k'$ such that
\begin{equation}\label{E:SufficientConditionGratingLobes}
  d ( \sin \theta^{i} + \sin \theta^{s} - \Delta ) / \lambda = d ( \sin \theta^{i} + \sin \widetilde{\theta^{s}} - \Delta ) / \lambda + k = k'.
\end{equation}
A necessary condition for (\ref{E:SufficientConditionGratingLobes}) is $d ( \sin \theta^{s} - \sin \widetilde{\theta^{s}} ) / \lambda = k$. To ensure the existence of such a grating lobe at $\widetilde{\theta^{s}}$, we must have $d / \lambda > 1/2$.

We demonstrate the scattered field for various values of $d / \lambda$ in Fig.~\ref{Fig:RIS_Aliasing}. It illustrates the concept of grating lobes. The RIS, consisting of $100$ unit cells with size $0.1\lambda \times 0.1\lambda$, is configured to steer the incident wave originating from $30^\circ$ to $-50^\circ$ by setting $\Delta = \sin(30^\circ) + \sin(-50^\circ)$. The incident wave is plotted in blue. The scattered electric fields at $r^s = 100 \lambda$ are plotted in red, from which we can see that the main lobes are centered around $-50^\circ$. For the case of $d / \lambda = 0.7$ (the solid red curve), there does exist a grating lobe centered around $41.49^\circ$.

\begin{figure}[!htbp]
  \centering
  \includegraphics[width=0.46\columnwidth]{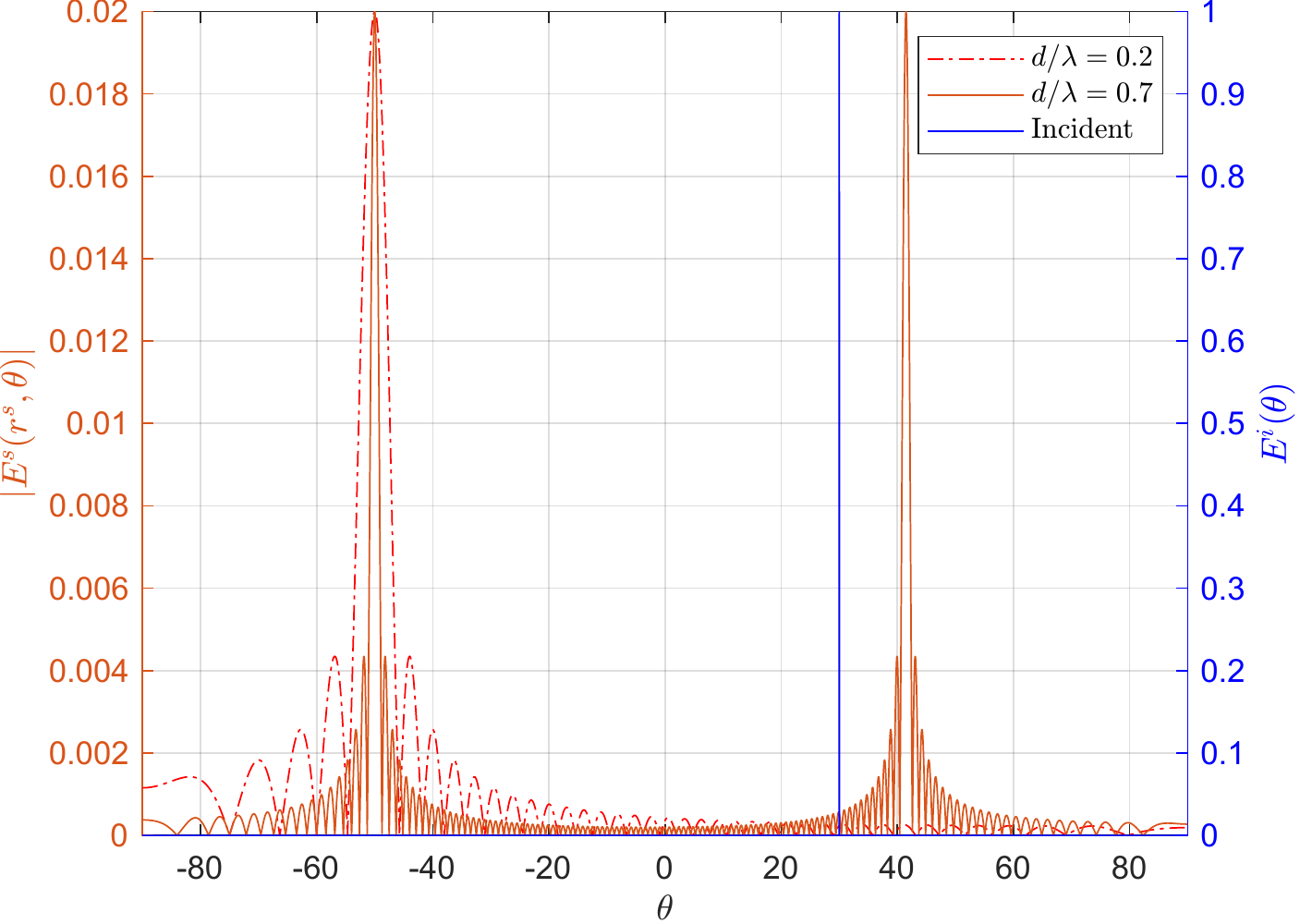}
  \caption{Grating lobe. The RIS is configured to steer the incident wave from $30^\circ$ to $-50^\circ$. The spike plotted in blue is the EM wave impinging upon RIS. The scattered fields at $r^s = 100 \lambda$ are plotted in red, from which we can see that the main lobes are both centered around $-50^\circ$. For the case of $d / \lambda = 0.7$ (the solid red curve), there does exist a grating lobe centered around $41.49^\circ$.}
  \label{Fig:RIS_Aliasing}
\end{figure}

\subsubsection{Anomalous reflection}
\noindent A significant limitation of beam steering through continuous phase compensation configurations is the anomalous mirror effect.

Specifically, for a linear RIS, we can choose $\Delta = \sin \theta^{i} + \sin \theta^{s}$ such that the incident wave from $\theta^i$ is steered to $\theta^s$, i.e., $d ( \sin \theta^{i} + \sin \theta^{s} - \Delta ) / \lambda = 0$. It is easy to check that, for any $| \Delta | < 2$, there are infinitely many pairs $( \widetilde{\theta^i}; \widetilde{\theta^s} ) \neq ( \theta^i; \theta^s )$ satisfying $d ( \sin \widetilde{\theta^{i}} + \sin \widetilde{\theta^{s}} - \Delta ) / \lambda + k = 0$, where $k$ is an integer. That means the RIS also steers the incident wave from $\widetilde{ \theta^i }$ to $\widetilde{ \theta^{s} }$. We theoretically demonstrate  the generalized Snell's law of reflections (anomalous mirror effect).

We illustrate the anomalous mirror effect in Fig.~\ref{Fig:Weight_TimeDelay}. There exist two EM waves impinging upon the RIS. It is configured to steer the wave from $30^\circ$ to $-50^\circ$. From the scattered field observed at $r^s = 100 \lambda$, we find two main lobes exist in the angular range. The higher one centered around $50^\circ$ is introduced by the incident wave from $30^\circ$. The lower one centered around $\theta = 52.59^\circ$ is related to the incident wave from $70^\circ$. RIS operates as an anomalous mirror.

\begin{figure}
  \centering
  \subfigure[]{
    \label{Fig:Weight_TimeDelay}
    \includegraphics[width=0.46\columnwidth]{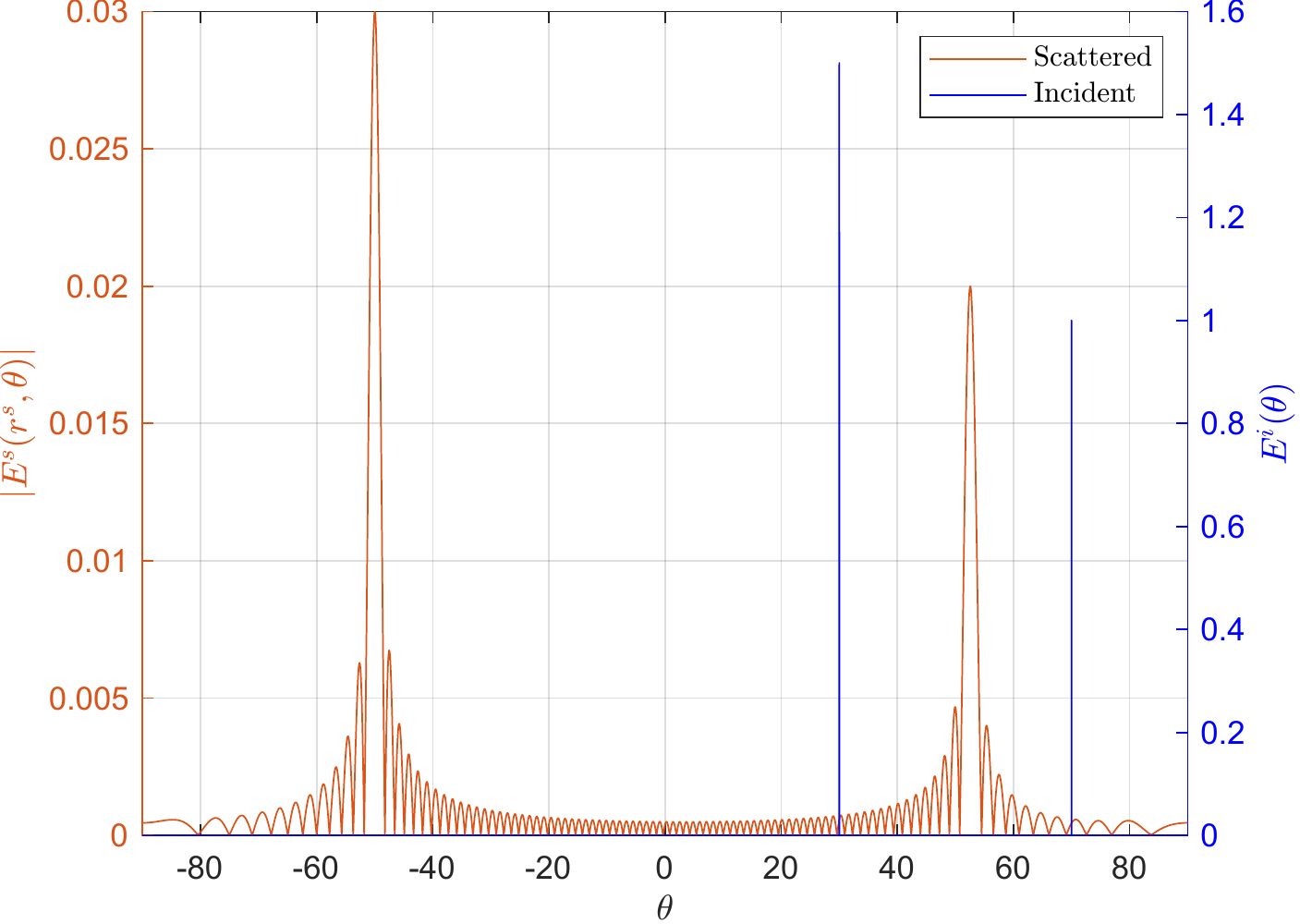}}
  \subfigure[]{
    \label{Fig:Weight_Fourier}
    \includegraphics[width=0.46\columnwidth]{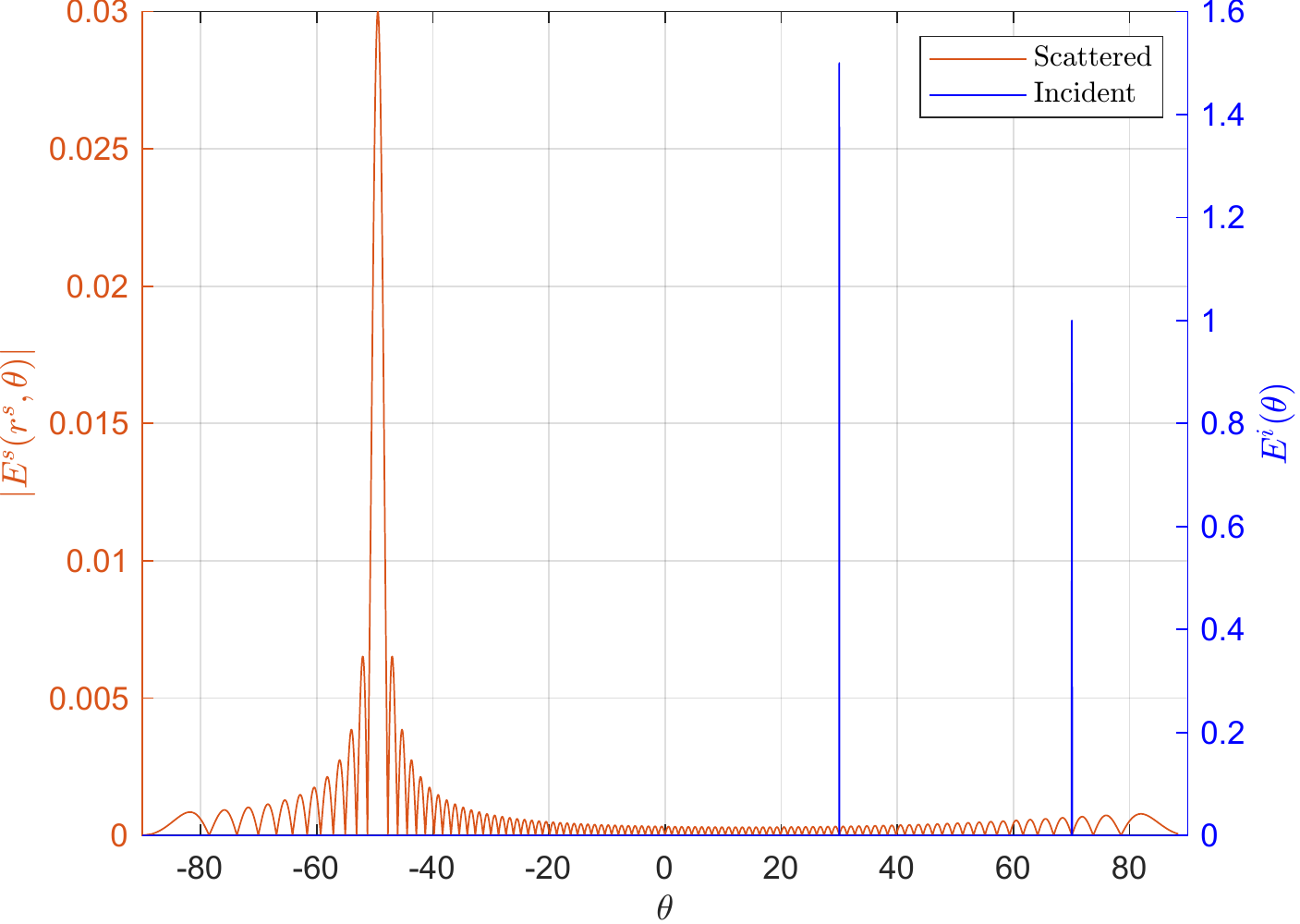}}
  \caption{(a) Anomalous reflections. The RIS is configured to steer the one originating from $30^\circ$ to $-50^\circ$. There are two main lobes in the angular range (in red). The higher one centered around $50^\circ$ is introduced by the incident wave from $30^\circ$. The lower one around $\theta = 52.59^\circ$ is related to the incident wave from $70^\circ$. (b) Arbitrary beam reshaping. The RIS is configured to perform energy focusing. The main lobe centered around $\theta = 52.59^\circ$ in (a) is removed.}
  \label{Fig:Weight_TimeDelay_Fourier}
\end{figure}

To have a better understanding of the difference between specular and anomalous reflections, we turn to the bistatic RCS functions. It is easy to check that the steering function is
\begin{equation}\label{E:Steer_PhaseCompensation}
  \begin{aligned}
    T (\theta^i, \theta^s) = C \sum_{n=1}^{N} \frac{ A_n }{ \lambda } e^{j \Omega_n} e^{ j 2 \pi (n-1) d \sin \theta^{i} / \lambda }  e^{ j 2 \pi (n-1) d \sin \theta^{s} / \lambda } 
    = C \sum_{n=1}^{N} \frac{ A_n }{ \lambda } e^{ j 2 \pi (n-1) d (\sin \theta^{i} + \sin \theta^{s} - \Delta) / \lambda } .
  \end{aligned}
\end{equation}
The bistatic RCS function is then given by
\begin{equation}\label{E:RCS_PhaseCompensation}
  \begin{aligned}
    \sigma (\theta^i, \theta^s) = 4 \pi \cos^2 \theta^i \bigl| T(\theta^s; \theta^i) \bigr|^2  
    = 4 \pi |C|^2 \cos^2 \theta^i \biggl| \sum_{n=1}^{N} \frac{ A_n }{ \lambda } e^{ j 2 \pi (n-1) d (\sin \theta^{i} + \sin \theta^{s} - \Delta) / \lambda } \biggr|^2 .
  \end{aligned}
\end{equation}

We plot the steering and bistatic RCS functions in Figs.~\ref{F:Transfer_RCS_Reflection} and \ref{F:Transfer_RCS_PhaseCompensation} to compare the difference between specular and anomalous reflections. We first set $\Delta = 0$ to make the RIS operate as a specular mirror. The steering and RCS functions are plotted in Fig.~\ref{F:Transfer_RCS_Reflection}. It is then configured to steer the incident wave from $30^\circ$ to $-50^\circ$ by setting $\Delta = \sin(30^\circ) + \sin(-50^\circ)$ as plotted in Fig.~\ref{F:Transfer_RCS_PhaseCompensation}. A comparison of Figs.~\ref{F:Transfer_RCS_Reflection} and \ref{F:Transfer_RCS_PhaseCompensation} shows that the steering and RCS functions of anomalous reflection can be considered as non-linear translations of the corresponding two functions of specular reflection.

\begin{figure}[!htbp]
  \centering
  \subfigure[]{
    \label{F:Transfer_RCS_Reflection:1}
    \includegraphics[width=0.45\columnwidth]{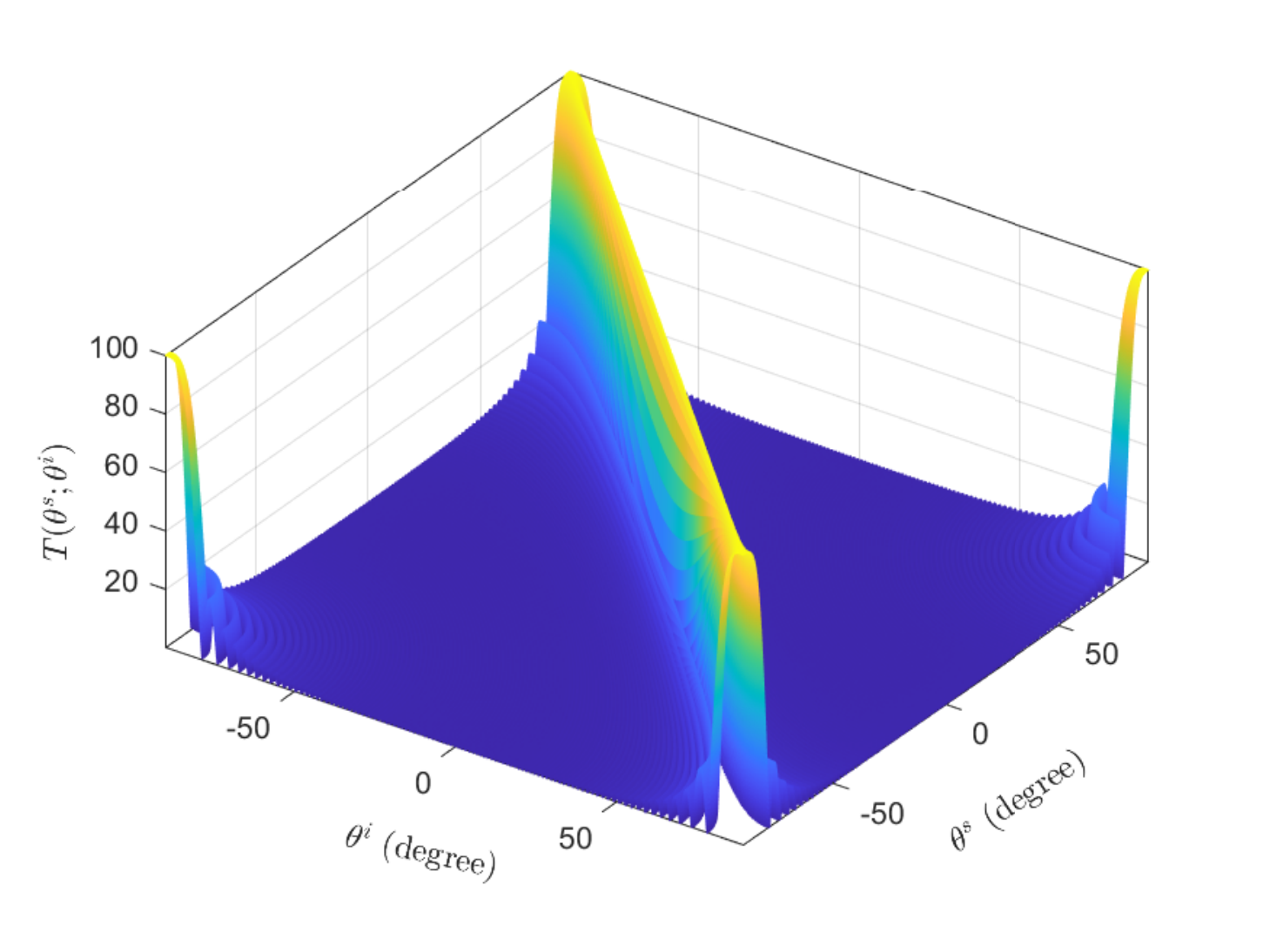}}
  \subfigure[]{
    \label{F:Transfer_RCS_Reflection:2}
    \includegraphics[width=0.45\columnwidth]{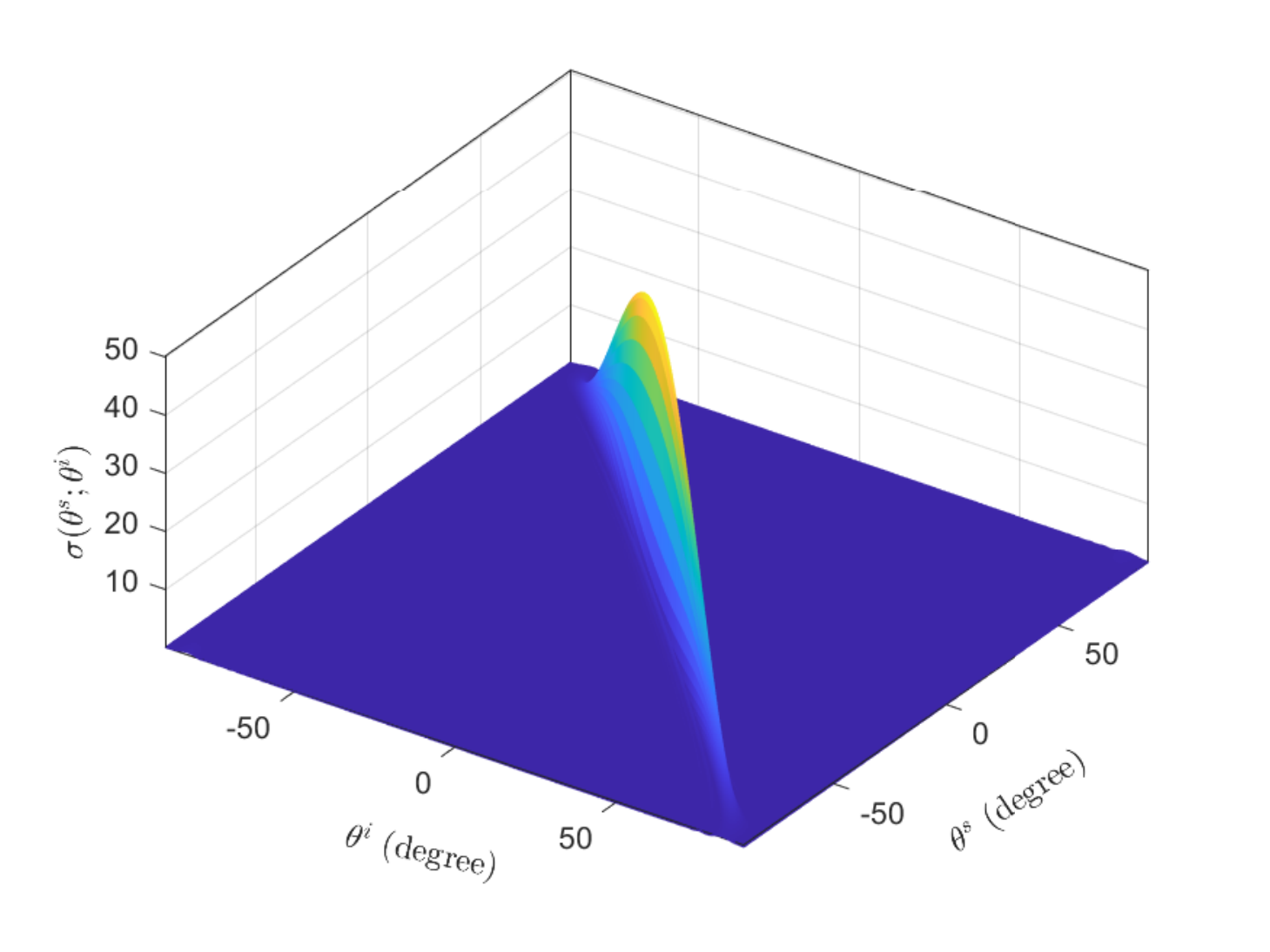}}
  \caption{The steering and bistatic RCS functions of specular reflection. We set $\Delta = 0$ to make the RIS operate as a specular mirror. (a) Steering function. (b) Bistatic RCS function.}
  \label{F:Transfer_RCS_Reflection}
\end{figure}

\begin{figure}[!htbp]
  \centering
  \subfigure[]{
    \label{F:Transfer_RCS_PhaseCompensation:1}
    \includegraphics[width=0.45\columnwidth]{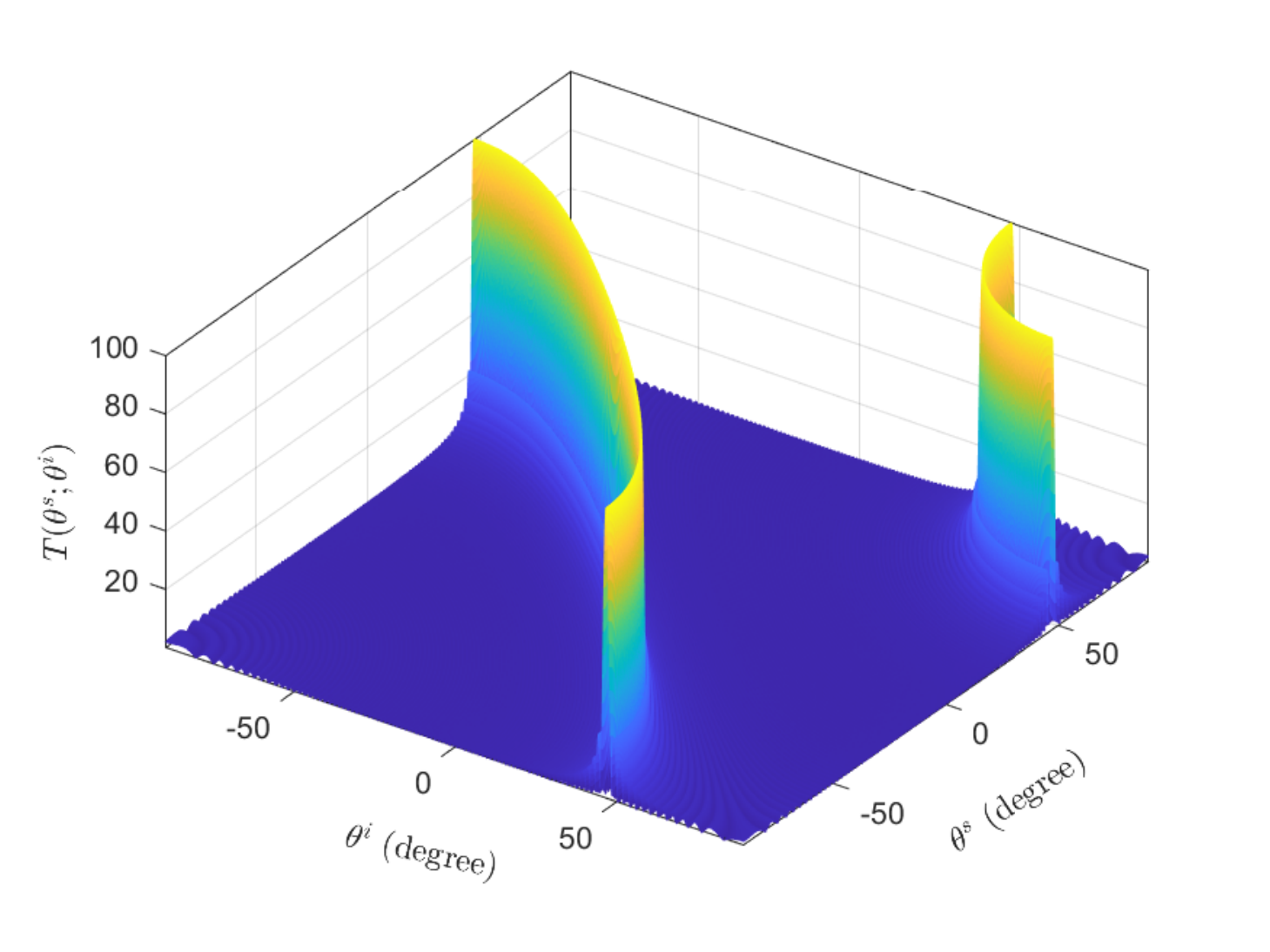}}
  \subfigure[]{
    \label{F:Transfer_RCS_PhaseCompensation:2}
    \includegraphics[width=0.45\columnwidth]{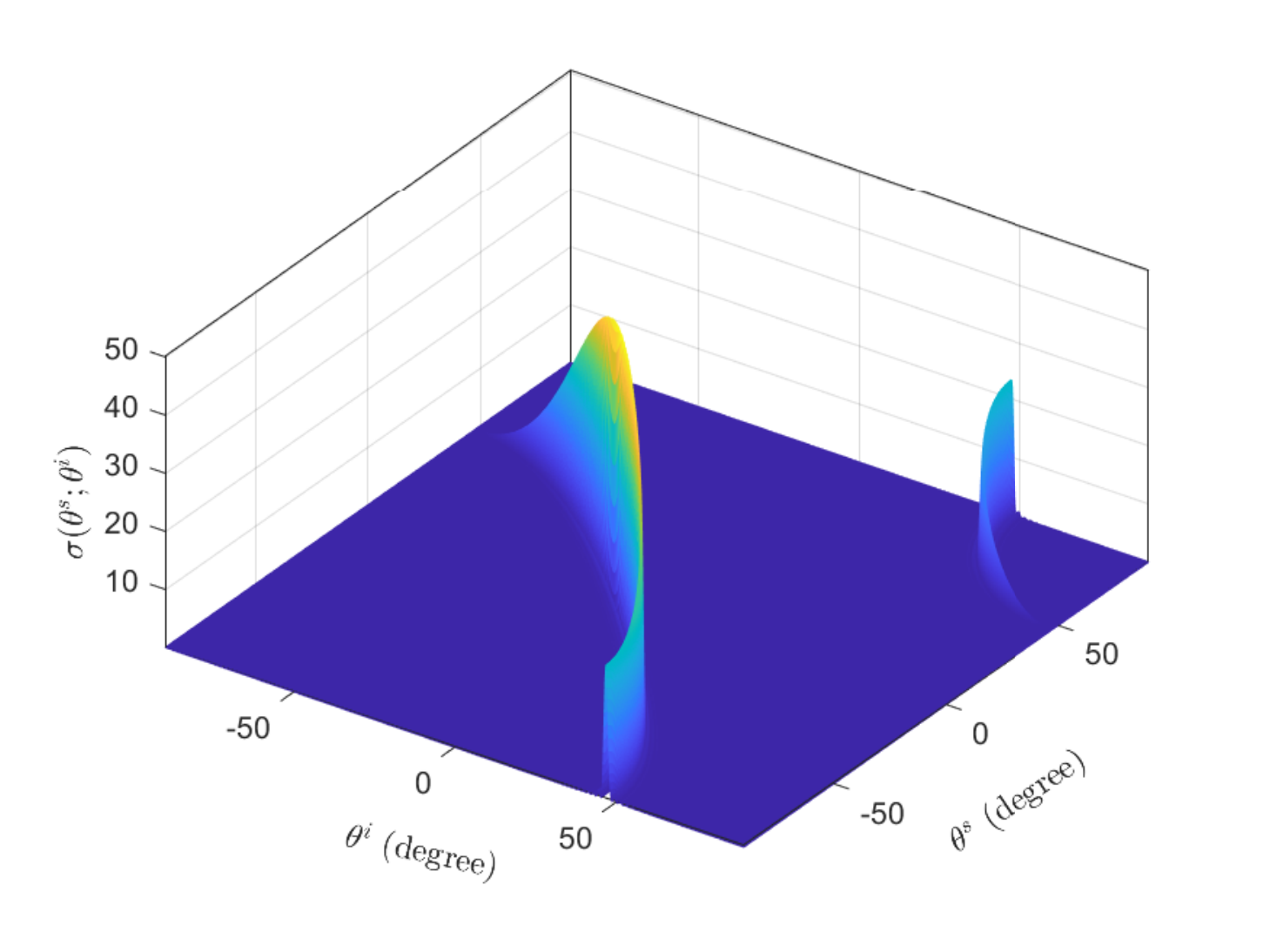}}
  \caption{The steering and bistatic RCS functions of anomalous reflection. The RIS is configured to steer the incident wave from $30^\circ$ to $-50^\circ$. The steering and RCS functions are non-linear translations of the corresponding two functions of
  specular reflection. (a) Steering function. (b) Bistatic RCS function. }
  \label{F:Transfer_RCS_PhaseCompensation}
\end{figure}

\subsection{Arbitrary beam reshaping}
\noindent We now discuss how to redistribute the incident power to produce an arbitrary (desired) power radiation pattern. The problem could be formulated as, given an incident field $ \mathbf{E}^i (\mathbf{\Theta}^i)$, how to choose $\mathbf{W}$ so that the actual scattered field $\mathbf{E}^s (r^s, \mathbf{\Theta}^s)$ approximates the desired field $\mathbf{E}^s_d (r^s, \mathbf{\Theta}^s)$, i.e., 
\begin{equation*}
  \underbrace{\mathbf{E}^s_d (r^s, \mathbf{\Theta}^s)}_{ \text{Desired} } \approx \mathbf{E}^s (r^s, \mathbf{\Theta}^s) 
  = \frac{C}{\lambda} \frac{ e^{-j 2 \pi r_s / \lambda }}{ r_s } \mathbf{V}(\mathbf{\Theta}^s) \overbrace{\mathbf{W}}^{ \text{Configuration} } \underbrace{ {\mathbf{V} (\mathbf{\Theta}^i)}^\top \textbf{cos} ( \mathbf{\Theta}^i ) \mathbf{E}^i (\mathbf{\Theta}^i) }_{ = \hat{\mathbf{E}}^i,\ \text{Known}}.
\end{equation*}

A most straightforward approach is to minimize the $\ell_2$-norm of the difference between desired and actual output, i.e.,
\begin{equation}\label{E:ScatteredFieldApproximation1}
  \mathbf{W} = {\arg\min}_{\mathbf{W}} \bigl\Vert \mathbf{E}^s (r^s, \mathbf{\Theta}^s) - \mathbf{E}^s_d (r^s, \mathbf{\Theta}^s) \bigr\Vert .
\end{equation}
For simplicity of notations, we let $\hat{\mathbf{E}}^i = {\mathbf{V}( \mathbf{\Theta}^i )}^\top \textbf{cos} ( \mathbf{\Theta}^i ) \mathbf{E}^i (\mathbf{\Theta}^i)$ and denote by $\beta$ the constant $\frac{ N C }{ \lambda } \frac{ e^{-j 2 \pi r_s / \lambda }}{ r_s }$. It is easily seen that (\ref{E:ScatteredFieldApproximation1}) is equivalent to
\begin{equation}\label{E:ScatteredFieldApproximation2}
  \min_{\mathbf{W}} \bigl\Vert \frac{\beta}{N} \mathbf{V}( \mathbf{\Theta}^s ) \mathbf{W}  \hat{\mathbf{E}}^i - \mathbf{E}^s_d (r^s, \mathbf{\Theta}^s) \bigr\Vert .
\end{equation}

To illustrate the potential of the simultaneous configuration of the collecting area and phase shifting, we consider a particular situation where the spacing $d = \lambda / 2$ and the scattered field are sampled at regular grids $\theta^s_n = \arcsin (-1 + 2 (n-1) / N)$, $n = 1, \ldots, N$. The advantage lies in the fact that the Vandermonde matrix $\mathbf{V}( \mathbf{\Theta}^s )$ is perfectly conditioned because the knots are equally spaced on the unit circle \cite{pan2016bad}. In fact, $ \mathbf{V}( \mathbf{\Theta}^s ) / \sqrt{N} $ is an angular discrete Fourier transform (DFT) matrix, i.e., $\mathbf{V}( \mathbf{\Theta}^s ) \mathbf{V} ( \mathbf{\Theta}^s )^{H} = \mathbf{V} ( \mathbf{\Theta}^s )^{H} \mathbf{V}( \mathbf{\Theta}^s ) = N \mathbf{I}$.

If we further write $\hat{\mathbf{E}}^s_d = \mathbf{V} ( \mathbf{\Theta}^s )^{H} \mathbf{E}^s_d (r^s, \mathbf{\Theta}^s)$, (\ref{E:ScatteredFieldApproximation2}) is equivalent to
\[
  \min_{\mathbf{W}} \bigl\Vert \beta \mathbf{W} \hat{\mathbf{E}}^i - \hat{\mathbf{E}}^s_d \bigr\Vert .
\]
A solution is given as
\[
  W_n = A_n e^{j \Omega_n} =
  \left\{
  \begin{array}{ll}
    \beta^{-1} \hat{E}^s_{d,n} / \hat{E}^i_n , & \hbox{$\hat{E}^i_n \neq 0$;} \\
    0,                                         & \hbox{$\hat{E}^i_n = 0$.}
  \end{array}
  \right.
\]

For generic scattered steering matrix $\mathbf{V}( \mathbf{\Theta}^s )$, provided $ \mathbf{V} ( \mathbf{\Theta}^s )^{H} \mathbf{V} ( \mathbf{\Theta}^s ) $ is invertible (i.e., $\mathbf{V} ( \mathbf{\Theta}^s )$ is a full column rank matrix), we let $\hat{\mathbf{E}}^s_d = \bigl( \mathbf{V} ( \mathbf{\Theta}^s )^{H} \mathbf{V} ( \mathbf{\Theta}^s ) \bigr)^{-1} \mathbf{V} ( \mathbf{\Theta}^s )^{H} \mathbf{E}^s_d (r^s, \mathbf{\Theta}^s)$. A feasible solution to (\ref{E:ScatteredFieldApproximation2}) is given as
\begin{equation}\label{E:WeightAreaPhase}
  W_n = A_n e^{j \Omega_n} =
  \left\{
  \begin{array}{ll}
    \beta^{-1} \hat{E}^s_{d,n} / \hat{E}^i_n, & \hbox{$\hat{E}^i_n \neq 0$;} \\
    0,                                        & \hbox{$\hat{E}^i_n = 0$.}
  \end{array}
  \right.
\end{equation}

\begin{rem}
If $\mathbf{V} ( \mathbf{\Theta}^s )$ is ill-conditioned or some $\hat{E}^i_n$ is close to zero, the scattered field $\mathbf{E}^s (r^s, \mathbf{\Theta}^s)$ can approximate the desired $\mathbf{E}^s_d (r^s, \mathbf{\Theta}^s)$, but $\hat{\mathbf{E}}^s_d$ or $\mathbf{W}$ will have a large norm. This means that the RIS with this configuration is more sensitive to perturbations. Low rank approximates of $\mathbf{V} ( \mathbf{\Theta}^s )$ or the truncated version of $\hat{\mathbf{E}}^i$ should be used. This ensures that the norms of $\hat{\mathbf{E}}^s_d$ and $\mathbf{W}$ will not be unnecessarily large.
\end{rem}

We now employ an example to demonstrate the arbitrary beam reshaping capability endowed by the simultaneous configuration of the collecting area and phase shifting. To make a fair comparison with phase compensation as illustrated in Fig.~\ref{Fig:Weight_TimeDelay}, the RIS is expected to focus two incident waves from $30^\circ$ and $70^\circ$ to produce a single main lobe centered around $-50^\circ$. The configuration is given through (\ref{E:WeightAreaPhase}). The final scattered electric field at $r^s = 100 \lambda$ is plotted in Fig.~\ref{Fig:Weight_Fourier}, from which we find the main lobe centered around $\theta = 52.59^\circ$ in Fig.~\ref{Fig:Weight_TimeDelay} is removed. We thus demonstrate the functionality of energy focusing through the simultaneous configuration of collecting area and phase shifting.

\section{Conclusion}
\noindent In this paper, we have employed continuous and discrete strategies to model a single patch and patch array and their interactions with multiple incident EM waves. We introduced a physically accurate formula to calculate the electric field scattered by a rectangular metallic patch. Several mathematically tractable models were proposed to characterize the input/output behaviors of patch array. Particularly, a simple system of linear equations was used to describe the MIMO behavior of linear RISs. Based on the proposed models, we have evaluated the performance of RISs under three typical configurations. We proved that a good solution to complicated beam reshaping functionality is through the simultaneous configuration of collecting area and phase shifting.

\appendices

\section{Proof of Theorem~\ref{T:ScatterdFieldPatch}}\label{S:Appendix_A}
\begin{proof}
  We follow similar steps as in \cite[Ch.~6.8]{balanis2012advanced}. Denote by $\mathbf{E}^i$ the incident electric field on the plate. The electric and magnetic fields could be written in Cartesian coordinates as
  \[
    \mathbf{E}^i = E^{i} \bigl( -\mathbf{e}_x \sin \phi^i  + \mathbf{e}_y \cos \phi^i \bigr) e^{ - j 2 \pi (-x \sin \theta^i \cos \phi^i - y \sin \theta^i \sin \phi^i - z \cos \theta^i ) / \lambda}
  \]
  and
  \[
    \mathbf{H}^i = \frac{ E^{i} }{ \eta } \bigl( \mathbf{e}_x \cos \theta^i \cos \phi^i + \mathbf{e}_y \cos \theta^i \sin \phi^i - \mathbf{e}_z \sin \theta^i \bigr) e^{ - j 2 \pi ( - x \sin \theta^i \cos \phi^i - y \sin \theta^i \sin \phi^i - z \cos \theta^i ) / \lambda} .
  \]
  Here $\eta$ is the intrinsic impedance of the medium.

  The starting point is the electric current density (measured in $A / m^2$) induced on the plate, which is given as 
  \[
    \mathbf{J}^s = \mathbf{n} \times \mathbf{H}^{\text{total}} |_{z=0} = \mathbf{e}_z \times \bigl ( \mathbf{H}^{i} + \mathbf{H}^{r} \bigr) |_{z=0}.
  \]
  The reflected magnetic field $\mathbf{H}^{r}$ could be represented by $\mathbf{H}^{r} = - \Gamma \mathbf{H}^{i}$, where $\Gamma$ is the reflection coefficient of the surface. We then write $\mathbf{J}^s$ in Cartesian coordinates as
  \begin{align*}
    \mathbf{J}^s 
     = & \mathbf{e}_z \times \bigl( \mathbf{H}^{i} + \mathbf{H}^{r} \bigr) |_{z=0} =  \mathbf{e}_z \times (1-\Gamma) \mathbf{H}^{i} |_{z=0} \\
     = & (1-\Gamma) \frac{ E^{i} }{\eta} \bigl( - \mathbf{e}_x \cos \theta^i\sin\phi^i + \mathbf{e}_y \cos \theta^i\cos \phi^i \bigr) e^{ - j 2 \pi (- x \sin \theta^i \cos \phi^i - y \sin \theta^i \sin \phi^i ) / \lambda } .
  \end{align*}
  Or equivalently,
  \begin{equation*}
    J^s_x =
    \begin{dcases*}
      - (1-\Gamma) \frac{E^{i} }{\eta} \cos \theta^i \sin \phi^i e^{ - j 2 \pi (-x \sin \theta^i \cos \phi^i - y \sin \theta^i \sin \phi^i ) / \lambda }, & for $ (x,y) \in S$; \\
      0,      & for elsewhere,
    \end{dcases*}
  \end{equation*}
  \begin{equation*}
    J^s_y =
    \begin{dcases*}
      (1-\Gamma) \frac{E^{i} }{\eta} \cos \theta^i \cos \phi^i e^{ - j 2 \pi (-x \sin \theta^i \cos \phi^i - y \sin \theta^i \sin \phi^i ) / \lambda }, & for $(x,y) \in S$; \\
      0,      & for elsewhere,
    \end{dcases*}
  \end{equation*}
  and
  \begin{equation*}
    J^s_z = 0, \quad \hbox{for everywhere}.
  \end{equation*}

  The structure of unit cells is designed to change surface current density $\mathbf{J}^s$ properly. We emphasize that the reflection coefficient $\Gamma$ is determined by the surface impedance. For a perfect electric conductor, the magnetic current density (measured in $V / m^2$) induced on the plate is $\mathbf{M}^s = \mathbf{0}$, i.e., $M^s_x = M^s_y = M^s_z = 0$, for everywhere.
  
  Next we will evaluate \cite[eq.~(6-123)]{balanis2012advanced}
  \begin{equation*}
  \begin{aligned}
    \mathbf{N}^s 
    = & \iint_{S} \mathbf{J}^s e^{j 2 \pi r' cos \psi / \lambda } d \Omega 
    = \iint_{S} \bigl( \mathbf{e}_x J^s_x + \mathbf{e}_y J^s_y + \mathbf{e}_z J^s_z \bigr) e^{j 2 \pi r' cos \psi / \lambda } d \Omega , \\
    \mathbf{L}^s
    = & \iint_{S} \mathbf{M}^s e^{j 2 \pi r' cos \psi / \lambda } d \Omega 
    = \iint_{S} \bigl( \mathbf{e}_x M^s_x + \mathbf{e}_y M^s_y + \mathbf{e}_z M^s_z \bigr) e^{j 2 \pi r' cos \psi / \lambda } d \Omega .
  \end{aligned}
  \end{equation*}
  The integrations are performed over the entire region occupied by the plate, as illustrated in Fig.~\ref{Fig:Scattering_OneIncident}. Since $\mathbf{M}^s = \mathbf{0}$, it is obvious that $\mathbf{L}^s = \mathbf{0}$, i.e., $L^s_{\theta} = L^s_{\phi} = L^s_{r} = 0$.
  
  We now turn to the evaluation of $\mathbf{N}^s$. With the rectangular-to-spherical component transformation
  \[
    \begin{bmatrix}
      \mathbf{e}_{r}      \\
      \mathbf{e}_{\theta} \\
      \mathbf{e}_{\phi}
    \end{bmatrix}
    =
    \begin{bmatrix}
      \sin \theta \cos \phi & \sin \theta \sin \phi & \cos \theta   \\
      \cos \theta \cos \phi & \cos \theta \sin \phi & - \sin \theta \\
      -\sin \phi            & \cos \phi             & 0
    \end{bmatrix}
    \begin{bmatrix}
      \mathbf{e}_x \\
      \mathbf{e}_y \\
      \mathbf{e}_z
    \end{bmatrix} ,
  \]
  we see that
  \begin{equation*}
    \begin{dcases*}
      \begin{aligned}
      N^s_{\theta} = & \iint_{S} \bigl[ J^s_x \cos \theta^s \cos \phi^s + J^s_y \cos \theta^s \sin \phi^s - J^s_z \sin \theta^s \bigl] e^{j 2 \pi r' \cos \psi / \lambda } d \Omega , \\
      N^s_{\phi} = & \iint_{S} \bigl[ - J^s_x \sin \phi^s + J^s_y \cos \phi^s \bigl] e^{j 2 \pi r' \cos \psi / \lambda } d \Omega .
      \end{aligned}
    \end{dcases*}
  \end{equation*}
  Using the fact $r' \cos \psi = \mathbf{r}' \cdot \mathbf{e}_r = x \sin \theta^s \cos \phi^s + y \sin \theta^s \sin \phi^s$, we have
  \begin{align*}
    N^s_{\theta} =
     & (1-\Gamma) \frac{ E^{i} }{\eta} \cos \theta^i \cos \theta^s \bigl( \cos \phi^i \sin \phi^s -\sin \phi^i \cos \phi^s \bigr) \\
     & \int_{-\frac{a}{2}}^{\frac{a}{2}} e^{ j 2 \pi x ( \sin \theta^s \cos \phi^s + \sin \theta^i \cos \phi^i ) / \lambda } d x 
    \int_{-\frac{b}{2}}^{\frac{b}{2}} e^{ j 2 \pi y ( \sin \theta^s \sin \phi^s + \sin \theta^i \sin \phi^i ) / \lambda } d y
  \end{align*}
  Using the integral $\int_{-\frac{c}{2}}^{\frac{c}{2}} e^{j \alpha x} d x = c \biggl[ \frac{\sin \bigl( \frac{\alpha}{2} c \bigr) }{ \frac{\alpha}{2} c } \biggr]$, the above expression could be written as
  \begin{align*}
    N^s_{\theta} =
     & (1-\Gamma)  a b \frac{ E^{i} }{\eta} \cos \theta^i \cos \theta^s \bigl( \cos \phi^i \sin \phi^s - \sin \phi^i \cos \phi^s \bigr) 
     \text{Sa} (a, b; \theta^s,\phi^s; \theta^i, \phi^i) .
  \end{align*}
  In the same vein, we have
  \begin{align*}
    N^s_{\phi} 
    = & \iint_{S} \bigl[ - J^s_x \sin \phi^s + J^s_y \cos \phi^s \bigr] e^{j 2 \pi r' \cos \psi / \lambda } d \Omega  \\
    = & (1-\Gamma) a b \frac{ E^{i} }{\eta} \cos \theta^i \bigl( \sin \phi^i \sin \phi^s + \cos \phi^i \cos \phi^s  \bigr) \text{Sa} (a, b; \theta^s,\phi^s; \theta^i, \phi^i) .
  \end{align*}

Then, in the far-field regime, the $(r,\theta,\phi)$ components of the scattered electric field $\mathbf{E}^s$ are \cite[eq.~(6-122)]{balanis2012advanced}
  \begin{equation*}
    \begin{dcases*}
      \begin{aligned}
      E^s_r ( r^s, \theta^s, \phi^s ) \simeq 
      & 0 , \\
      E^s_{\theta} ( r^s, \theta^s, \phi^s ) \simeq 
      & C \frac{ A }{ \lambda } \frac{ e^{-j 2 \pi r^s / \lambda } }{ r^s } E^i \cos \theta^i \cos \theta^s \bigl( \cos \phi^i \sin \phi^s - \sin \phi^i \cos\phi^s\bigr) \text{Sa} (a, b; \theta^s,\phi^s; \theta^i, \phi^i) , \\
      E^s_{\phi} ( r^s, \theta^s, \phi^s ) \simeq 
      & C \frac{ A }{ \lambda } \frac{ e^{-j 2 \pi r^s / \lambda } }{ r^s } E^i \cos \theta^i \bigl( \sin \phi^i \sin \phi^s + \cos \phi^i \cos \phi^s \bigr) \text{Sa} (a, b; \theta^s,\phi^s; \theta^i, \phi^i) .
      \end{aligned}
    \end{dcases*}
  \end{equation*}
\end{proof}

\bibliographystyle{IEEEtran}
\bibliography{References}

\end{document}